\documentclass[a4paper,UKenglish]{article}
\usepackage[utf8]{inputenc}
\usepackage{hyperref}
\usepackage{amsmath,amssymb,amsthm}
\usepackage[ruled]{algorithm2e}
\usepackage{cleveref}
\usepackage{thm-restate}
\usepackage[margin=1in]{geometry}
\usepackage{newtxtext,newtxmath}
\usepackage{booktabs}
\usepackage{authblk}
\usepackage{dirtytalk}
\usepackage[dvipsnames]{xcolor}

\usepackage{listings}
\usepackage{enumitem}
\setlist[1]{labelindent=\parindent}
\setlist[enumerate]{label=(\arabic*),noitemsep}
\setlist[itemize]{noitemsep}

\newtheorem{theorem}{Theorem}
\newtheorem{lemma}[theorem]{Lemma}
\newtheorem{corollary}[theorem]{Corollary}

\newtheorem{definition}[theorem]{Definition}
\newtheorem*{definition*}{Definition}

\newtheorem{observation}[theorem]{Observation}
\theoremstyle{definition}

\theoremstyle{remark}

\newtheorem*{note*}{Note}

\newtheorem*{remark*}{Remark}
\newtheorem{open}{Open Problem}

\title{Revisiting Diameter in Directed Graphs}
\author[1,2]{Ben Bals}
\author[1]{Joakim Blikstad}
\author[1]{Daniel Dadush}
\author[1,2]{Yasamin Nazari}
\author[3]{Jonas Schmidt}
\affil[1]{CWI, Amsterdam, The Netherlands}
\affil[2]{Vrije Universiteit, Amsterdam, The Netherlands}
\affil[3]{Bocconi University, Milan, Italy}

\usepackage{thmtools}
\usepackage{thm-restate}
\usepackage{placeins}
\usepackage[normalem]{ulem}
\usepackage{cryptocode}
\usepackage{bm}
\usepackage{tabularx}
\usepackage{makecell}

\title{Revisiting Diameter in Directed Graphs} 
\crefname{theorem}{Thm.}{Thms.}
\Crefname{theorem}{Theorem}{Theorems}
\crefname{corollary}{Cor.}{Cors.}
\Crefname{corollary}{Corollary}{Corollaries}

\ifdefined\N\else\DeclareMathOperator{\N}{\mathbb{N}}\fi
\ifdefined\Z\else\DeclareMathOperator{\Z}{\mathbb{Z}}\fi
\ifdefined\R\else\DeclareMathOperator{\R}{\mathbb{R}}\fi

\newcommand{\BigO}{\mathcal{O}}
\newcommand{\OTilde}{\tilde{\BigO}}
\newcommand{\Oish}{\OTilde}
\newcommand{\setsize}[1]{\left|#1\right|}

\newcommand{\tw}{\mathrm{tw}}

\DeclareMathOperator{\tdOp}{ReachDiam}
\newcommand{\td}[1]{\tdOp(#1)}

\DeclareMathOperator{\tc}{TC}
\DeclareMathOperator{\MinDiam}{MinDiam}
\newcommand{\smol}{\varepsilon}

\newif\ifpaper

\newcommand{\tail}[1]{\mathrm{tail}(e)}

\DeclareMathOperator{\mcc}{MCC}

\newcommand{\ie}[1]{(i.e.,~#1)}
\newcommand{\eg}[1]{(e.g.,~#1)}

\newif\ifhidetodos
\newcommand{\cbox}[2][yellow]{%
  \fcolorbox{#1}{white}{\parbox{\dimexpr\linewidth-2\fboxsep}{\strut #2\strut}}%
}

\newcommand{\customtodo}[3]{\textcolor{#2}{\(\blacktriangledown\)}\marginnote{\raggedright \textcolor{#2}{\textbf{#1:} #3}}}
\newcommand{\customtododisplay}[3]{\noindent\cbox[#2]{\textcolor{#2}{\textbf{#1:} #3}}}
\newcommand{\customtodoinline}[3]{\textcolor{#2}{\textcolor{#2}{\textbf{#1:} #3}}}

\ifhidetodos
    \renewcommand{\customtodo}[3]{}
    \renewcommand{\customtododisplay}[3]{}
    \renewcommand{\customtodoinline}[3]{}
\fi

\usepackage{caption}
\usepackage{subcaption}
\usepackage{mathtools}
\usepackage[ruled]{algorithm2e}

\usepackage{thmtools}
\usepackage{thm-restate}

\usepackage{booktabs}
\newtheorem{hypothesis}{Hypothesis}
\newtheorem{proofsketch}{Proof Sketch}

\usepackage{comment}

\newif\ifshort
\newif\iflong

\longtrue
\begin{document}

\maketitle

\begin{abstract}
The reachability diameter ($\tdOp$) of a directed graph is the maximum distance over all pairs $u,v$ where $v$ is reachable from $u$.
This notion is present in the definition of shortcut sets, and the name was recently coined in that context by Haeupler, Jiang, and Saranurak~[SOSA 2026].
While this is a very natural notion of diameter in directed graphs, and especially DAGs, it is so far not computationally explored. Other definitions of diameter in directed graphs are either trivial (infinite) in graphs that are not strongly connected \eg{the classical definition} or are non-trivial only in highly restrictive graph classes \eg{Min-Diameter}.

We initiate the problem of computing the (approximate) reachability diameter from a fine-grained complexity point of view.
Under certain fine-grained assumptions, we prove that there is no algorithm in time $\BigO(n^{\omega - \smol}$) that gives \emph{any} approximation of $\tdOp$ in \emph{weighted graphs}. Similarly, there is no algorithm with better than $2$-approximation for \emph{unweighted graphs} in this time. To supplement this, we provide algorithmic upper bounds that lead to additive approximation of $\tdOp$ for unweighted graphs.
Hence, we establish a strong separation between the weighted and unweighted cases, which makes this type of diameter different in nature than other known notions.

Considering the hardness in general weighted graphs, we also study special graph classes and get small constant approximations for DAGs with bounded width or graphs with bounded treewidth.
Interestingly, our techniques also lead to exact hopsets with hopbound $2$ for bounded treewidth graphs.
This and some of our upper bounds for general graphs show technical connections between approximating $\tdOp$ and computing shortcut sets and hopsets.
\end{abstract}

\section{Introduction}

Computing the diameter of a graph is a fundamental problem in fine-grained complexity that has been widely studied both from an upper bound and a lower bound perspective. While the problem is very well-understood in undirected graphs, there is still much that we do not know for directed graphs. In particular, it is not even clear what the \textit{right definition} of diameter for directed graphs should be, depending on the application. For example, existing definitions of the diameter become trivial in a DAG. In this work, we focus on the notion of the \textit{reachability diameter} of a directed graph recently defined by \cite{SOSA}. Although this is a very natural definition, it has not yet been computationally explored. Our goal in this work is to characterize the fine-grained complexity of computing or approximating this notion.

We first formally define it:
\begin{definition*}[Reachability Diameter]
    \label{def:td}
    Let $G$ be a directed graph.  Let $\td{G} \coloneqq \max_{(u,v) \in \tc(G)} d(u,v)$, where $\tc(G)$ is the transitive closure of $G$. In other words, we consider the maximum shortest path distance over all pairs $u, v \in V(G)$ where $v$ is reachable from $u$. 
\end{definition*}
This definition is closely related to the notion of diameter used for shortcut sets. In particular, \cite{SOSA} used low diameter decompositions based on this notion of diameter for constructing shortcut sets. Informally, a shortcut set is a set of edges $H$ added to a graph $G$ such that the reachability diameter of $G \cup H$ is bounded by a given parameter~$\beta$, called the hopbound. While shortcut sets and their distance preserving variants, hopsets, have found many applications and received significant attention recently, surprisingly \textit{computing} this relevant notion of diameter has, to the best of our knowledge, not yet been studied. 

In addition to fine-grained lower bounds, we will show additive approximations for this problem that lead to good enough estimates when this diameter is large. This is the relevant regime for shortcut sets in general directed graphs, in which our algorithms can be used to \textit{verify} the diameter reduction procedure after adding shortcut edges. However, we see that getting constant approximate solutions in the other regimes turns out to be hard. 

First, let us discuss how this definition compares to other directed diameter definitions and how they compare in the difficulty of computation.

\paragraph{Classical Diameter.} Classically, the diameter of a directed graph is the largest shortest path distance over all pairs of vertices.  This is infinite when the graph is not strongly connected. 
This diameter can be estimated in a straight-forward way. Consider the following folklore approach: Pick an arbitrary vertex $u$. Compute an inward and an outward Dijkstra. Return the largest distance from or to $u$. It is easy to show that this is a $2$-approximation for the classical diameter in weighted directed graphs. This also correctly returns $\infty$ when the graph is not strongly-connected. 

The state-of-the-art for this diameter \cite{roditty_fast_2013} improves this to a $3/2$-approximation by a more sophisticated argument that involves sampling a set of vertices and exploring a limited number of vertices from this set. In fact, for this notion of diameter, this approach provides similar guaranties on directed and undirected graphs.
 
With such approaches, if for a pair of vertices $u,v$ either of $d(u,v)$ or $d(v,u)$ is infinite, the algorithm returns infinity, which is the correct answer for this notion of diameter. Hence for graphs that are not strongly connected, this value is easy to compute but not very insightful as we do not receive any information about the vertex pairs that have a finite distance.
In such graphs, we would like to have a notion of distance for each pair in the relevant direction, namely the direction where those vertices are reachable. This is exactly what the reachability diameter captures. 

Note that if the classical diameter is finite \ie{in a strongly connected graph}, then it is equivalent to the reachability diameter. 
Therefore reachability diameter can be seen as a generalization of the classical notion.
In particular, classical diameter reduces to reachability diameter simply by first checking for strong connectivity. 

At a high level, this is why we expect the reachability diameter to be more difficult to compute.
We make this formal in
\iflong
\Cref{sec:weighted}
\fi
\ifshort
the full version
\fi
where, among other things, we prove that there exists no constant factor approximation of reachability diameter in weighted graphs in near-linear time unless well-known fine-grained conjectures fail.

\paragraph{Round-trip Diameter.} The roundtrip diameter is another diameter notion that attempts to capture the maximum distance in both directions: \cite{abboud_approximation_2016} describes it as ``the distance between two vertices $u$ and $v$ corresponds to going from $u$ to $v$ and back''. Formally, the round-trip diameter is defined as the maximum over all pairs $u,v$ of $d(u,v) + d(v,u)$. This notion also has the same limitation that it is only finite in strongly connected graphs.
\paragraph{Min-Diameter.}
The min-diameter ($\MinDiam$) was introduced to offer a meaningful notion of diameter in a directed graph that is not strongly connected and has received considerable attention \cite{abboud_approximation_2016,DBLP:conf/icalp/DalirrooyfardW019,DBLP:conf/focs/ChechikZ22,DBLP:conf/esa/BergerKW23}. The min-diameter of a directed graph is defined as the maximum over all vertices $u,v$ of $\min(d(u,v), d(v,u))$. 

The goal with this definition is somewhat similar to ours, as this value will be finite as long as there is reachability in one direction for each pair. Hence, this value is still meaningful for graphs that are not strongly connected.
However, this definition is still quite restrictive. For example, a DAG has a finite min-diameter if and only if it represents a total order. Equivalently, such graphs will have the following structure: a single path plus extra edges along the path.
For general directed graphs, including those that are not strongly connected, the DAG of strongly connected components must have such a structure for the $\MinDiam$ to be finite.

Therefore, unless a directed graph has this very specific structure, its $\MinDiam$ will again be infinite and not informative. 
Conceptually, this can also explain why $\MinDiam$ turns out to be easier to compute than $\tdOp$.
On DAGs, $\MinDiam$ can be 2-approximated in near-linear time \cite{abboud_approximation_2016}, which we show is impossible for $\tdOp$ under standard fine-grained conjectures in
\iflong
\Cref{sec:weighted}.
\fi
\ifshort
the full version.
\fi

Note that reachability diameter can also be seen as a generalization of the min-diameter: In the more interesting finite case, $\MinDiam$ is exactly equivalent to $\tdOp$.
The infinite case is easy to detect in linear time by computing strongly connected components and then a topological sort.

Recently, \cite{DBLP:conf/esa/BergerKW23} proposed a 3/2-approximation for $\MinDiam$ of sparse unweighted DAGs in subquadratic time.
Also on weighted general directed graphs, with a more sophisticated approach, $\MinDiam$ can be approximated to a constant factor in near-linear time due to a recent result by Chechik and Zhang \cite{DBLP:conf/focs/ChechikZ22}.

\paragraph{Weighted vs Unweighted Reachability Diameter.}
Another interesting fact about this new notion of diameter that we establish is a strong separation between approximating the weighted vs unweighted case that does not hold for the other definitions. In particular, we show that in weighted graphs assuming certain known fine-grained conjectures, we cannot get any multiplicative approximation in time $\BigO(n^{\omega-\smol})$, whereas we can compute approximations for the unweighted case. This also hints at why reachability diameter is harder to compute than the other notions, as most known diameter estimation techniques work both for the unweighted and the weighted case (possibly with extra technical steps and analysis). 

\paragraph{Difficulty of Applying Known Techniques for Computing the Reachability Diameter.}
For both the classical definition and for $\MinDiam$, the known algorithms involve a step of sampling vertices, and computing the in- and out-reachability from each sampled vertex. This will give an estimate that for the classic diameter leads to a 2-approximation directly, and with some additional steps also leads to a constant-factor approximation for $\MinDiam$ or better approximations of the classical diameter.

For reachability diameter, this scheme has a fundamental limitation: the estimates obtained from the furthest distance of a set of sampled vertices may not give any useful information on $\tdOp$.
First consider the folklore approach for classical diameter that runs an inward and outward Dijkstra from a single vertex $u$. The analysis then resorts to triangle inequality to approximate any distance $d(a,b)$ with $d(a,u) + d(u,b)$. For $\tdOp$ however, if one of the latter distances is infinite, we gain no information about the true distance between $a$ and $b$, not even if it is finite.

Intuitively, a large part of the graph may show a similar behavior: Consider any directed graph with $n$ vertices. Say, we add another $n$ source vertices, with a directed edge to every original vertex. This is enough to turn both classical diameter and min-diameter infinite, while leaving the reachability diameter completely unchanged.
Furthermore, computing shortest paths from the newly added vertices gives us no information about the original graph; the same holds for balls of fixed size or distance, as commonly used in diameter approximation algorithms.
Thus, any constant-factor approximation scheme that relies on the idea of iteratively sampling a vertex and then computing distances from this vertex would need to contain a mechanism to ignore these ``distraction'' vertices. To the best of our knowledge, the existing diameter toolkit cannot handle and distinguish this case.

The more involved algorithms for classical diameter \eg{\cite{roditty_fast_2013}} and $\MinDiam$ \eg{\cite{abboud_approximation_2016} and \cite{DBLP:conf/focs/ChechikZ22}} will still resist adaptation to $\tdOp$ for similar reasons.
This is because, at their core, they still require computing shortest paths from and to a small number of carefully chosen vertices (potentially in a recursive sub-instance or in limited size balls) to be informative about the diameter.

We will see that with techniques based on sampling vertices or hitting sets, we \textit{can} obtain polynomial additive approximations for reachability diameter in unweighted graphs in $o(mn)$ time, where there is a trade-off between the computation time and the approximation. However for the reasons described, it seems quite challenging to get rid of these larger additive factors for general graphs. This implies that we get better approximations only when the reachability diameter is known to be large, but getting good estimates when the diameter is small remains a challenge. We are, however, able to avoid these bad instances and get good approximations for special graph classes by using chain-cover based or tree decomposition based approaches. 

\paragraph{Connections with Shortcut sets and Hopsets.}
As discussed earlier, we noted that the reachability diameter is the appropriate notion of diameter when we consider objects such as shortcut sets and hopsets. In fact, if after computing a shortcut set (or hopset), we would like to verify that we have correctly reduced the diameter to the desired parameter, we would need to estimate the reachability diameter.

Interestingly, while we cannot show any blackbox reductions between computing shortcut sets and reachability diameter, we do establish some technical connections. We see that using the same ideas as in shortcut sets of \cite{DBLP:conf/focs/LiuJS19} gives us algorithms where the approximation factor for reachability diameter correspond to the hopbound achieved in the same time. 

Building on this connection, we give a new exact hopset construction and a simplified algorithm for shortcut sets for graphs with bounded treewidth.
\iffalse
\paragraph{Related Definitions.}
If we look a bit further, the definitions of the radius of a graph (generally the minimum over all vertices of the maximum of the distance to all other vertices) are as numerous as the definitions of the diameter.
Variations again include the source radius, max-radius, min-radius, and roundtrip radius.
Both the roundtrip diameter and the known radius definitions use a similar algorithmic toolkit as the diameter variations we compared to in more detail above and adopting insights from the studies of these is not directly possible for the very same reasons. 
\fi
\section{Our Results}
Our goal is to characterize the fine-grained complexity landscape of (approximate) reachability diameter computation both from a lower bound and upper bound point of view.

For some settings, these results are tight (such as for approximating $\tdOp$ in weighted graphs), for others gaps remain (such as for unweighted graphs).
While the main upper bound techniques do not translate to our setting for constant-approximation of $\tdOp$ due to described technical challenges, we do show several upper bounds either with larger approximation for general graphs or with small approximation for special graph classes. 

We outline the main results here; see \Cref{tab:algos} for a summary of our upper bounds and \Cref{tab:lb} for our lower bounds.
A main distinction for both upper and lower bounds will be if they apply to weighted or unweighted directed graphs.
When we discuss optimality, we always do so up to subpolynomial factors.

\ifshort
Due to space constraints, all proofs are deferred to the full version, which is appended to this document.
\fi

\iflong
\begin{table}[tp]
    \centering
    \begin{tabular}{clclll} \toprule
        \textbf{Weighted?} & \textbf{Approx.} & \textbf{Det.?} & \textbf{Running time} & \textbf{Ref.} & \textbf{Technique} \\ \midrule
        \checkmark & exact & \checkmark & \(\OTilde(nm)\) & & $n$ $\cdot$ BFS \\
        \checkmark & exact & \checkmark & \(\OTilde(m \cdot \mathrm{tw} + m^{1 + o(1)})\) &  \cref{thm:treewidth} & Separator D\&C like \cite{abboud_approximation_2016} \\ 
        \checkmark & \(1+\smol\) & \checkmark & \(\OTilde(n^\omega / \smol)\) &  & Round + APSP-approx\\
        \checkmark & 3 & \checkmark & \(\OTilde(m \cdot |\mcc|)\) & \cref{cor:mcc_approx} & Path-based approach \\
        \midrule
        x & exact & \checkmark & \(\OTilde(n^\omega)\) &  & Exact transitive closure\\
        x & \((1,k)\) & x & \(\OTilde(nm/k)\) & \cref{thm:random-sampling} & Random hitting set\\
        x & \((3, k)\) & \checkmark & \(\OTilde(nm/k)\) & \cref{thm:sampling_derandomized} & $\ell$-cover + path-based \\
        x & \(n^{1/2 + o(1)}\) & x & \(\OTilde(m)\) & \cref{thm:ljs} & Sampling pivots\\
        \bottomrule
    \end{tabular}
    \caption{Overview of approximation algorithms for reachability diameter on DAGs. Det.\,abbreviates deterministic. The approximation is given as a multiplicative factor except for a tuple $(a,b)$ which indicates a $a$-multiplicative and $b$-additive approximation.}
    \label{tab:algos}
\end{table}
\fi
\ifshort
\begin{table}[tp]
    \centering
    \begin{tabular}{clclll} \toprule
        \textbf{Weighted?} & \textbf{Approx.} & \textbf{Det.?} & \textbf{Running time} & \textbf{Ref.} & \textbf{Technique} \\ \midrule
        \checkmark & exact & \checkmark & \(\OTilde(nm)\) & & $n$ $\cdot$ BFS \\
        \checkmark & exact & \checkmark & \(\OTilde(m \cdot \mathrm{tw} + m^{1 + o(1)})\) &  $\star$ & Separator D\&C like \cite{abboud_approximation_2016} \\ 
        \checkmark & \(1+\smol\) & \checkmark & \(\OTilde(n^\omega / \smol)\) &  & Round + APSP-approx\\
        \checkmark & 3 & \checkmark & \(\OTilde(m \cdot |\mcc|)\) & \cref{cor:mcc_approx} & Path-based approach \\
        \midrule
        x & exact & \checkmark & \(\OTilde(n^\omega)\) &  & Exact transitive closure\\
        x & \((1,k)\) & x & \(\OTilde(nm/k)\) & \cref{thm:random-sampling} & Random hitting set\\
        x & \((3, k)\) & \checkmark & \(\OTilde(nm/k)\) & \cref{thm:sampling_derandomized} & $\ell$-cover + path-based \\
        x & \(n^{1/2 + o(1)}\) & x & \(\OTilde(m)\) & \cref{thm:ljs} & Sampling pivots\\
        \bottomrule
    \end{tabular}
    \caption{Overview of approximation algorithms for reachability diameter on DAGs. Det.\,abbreviates deterministic. The approximation is given as a multiplicative factor except for a tuple $(a,b)$ which indicates a $a$-multiplicative and $b$-additive approximation. References marked with a $\star$ are deferred to the full version.}
    \label{tab:algos}
\end{table}
\fi

\subsection{Weighted Graphs}
For weighted graphs, the worst-cast complexity landscape is relatively simple.
Let us start with the main hardness result (see 
\iflong
\Cref{sec:hypotheses}
\fi
\ifshort
the full version
\fi
for the definitions of the hardness assumptions).

\begin{restatable}{theorem}{thmLbSV}
    \label{thm:lb:sv}
    Unless the simplicial vertex conjecture fails, there  is no $poly(n)$-approximation for \(\tdOp\) on weighted directed graphs (even on DAGs) running in time \(\BigO(n^{\omega-\smol})\).
\end{restatable}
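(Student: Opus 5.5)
We reduce from the simplicial vertex problem: given an undirected graph $H$ on $n$ vertices, decide whether some vertex $v$ has a neighborhood $N(v)$ that is a clique. The simplicial vertex conjecture says this needs $n^{\omega-o(1)}$ time. Equivalently, we must decide whether there is a vertex $v$ and two neighbors $a,b \in N(v)$ with $a\neq b$ and $\{a,b\}\notin E(H)$; no simplicial vertex exists iff every vertex has such a "bad pair". We build a weighted DAG $G$ on $\BigO(n)$ vertices in $\BigO(n^2)$ time. We arrange that $\td{G}$ is at most some small value $L$ (say $\BigO(1)$) if no simplicial vertex exists, and at least a huge weight $W = n^{c}$ (for any fixed $c$) otherwise. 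Then any $\mathrm{poly}(n)$-approximation running in $\BigO(n^{\omega-\smol})$ time separates the two cases and refutes the conjecture.

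\textbf{Construction.} We use four layers $V_1,V_2,V_3,V_4$, each a copy of $V(H)$. Layer $V_1$ holds the candidate vertex $v$ and $V_4$ holds a second copy of $v$. The heavy edges are $(v_1,v_4)$ for every $v$, each of weight $W$; this guarantees $v_4$ is reachable from $v_1$ with distance at most $W$. The light (weight $0$ or $1$) edges encode a cheap route from $v_1$ to $v_4$ that exists exactly when $v$ has a bad pair. Concretely, we add $(v_1,a_2)$ for $a\in N(v)$, then $(a_2,b_3)$ for $a\neq b$ with $\{a,b\}\notin E(H)$, then $(b_3,v_4)$ for $b\in N(v)$.

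\textbf{The obstacle.} A light path $v_1\to a_2\to b_3\to u_4$ may end at $u_4\neq v_4$. This is fine, since such pairs are connected by a light path and only contribute a small distance. The real issue is the reverse direction. If $v$ is simplicial, $v_4$ is still reachable from $v_1$ via the heavy edge, so $d(v_1,v_4)=W$, which is the correct behavior for the YES case. We also need the NO case to have every reachable pair at small distance. Pairs $(x,y)$ with $x\in V_1$, $y\in V_4$, $y\neq x$ that are reachable only via the heavy edge cannot occur, because the heavy edges only join copies of the same vertex. The only remaining condition is that $d(v_1,v_4)$ is small whenever $v$ has a bad pair, and the three light edges give exactly this. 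All other reachable pairs are joined by light paths of length at most $3$.

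Hence $\td{G}\le 3$ if $H$ has no simplicial vertex, and $\td{G}\ge W$ otherwise. The one subtle point is the intermediate pairs: $v_1$ may reach $u_4$ both via light paths and via the heavy edge. Since the heavy edges are only self-copies, pairs $(v_1,u_4)$ with $u\ne v$ are reachable only through light paths. This is the step to verify carefully. If there are any leaks, I would subdivide or add light "escape" edges so that every pair other than $(v_1,v_4)$ is joined by a light path whenever it is reachable.

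Finally, $G$ is a DAG with $4n$ vertices. Choosing $W=n^{c+1}$ defeats any $n^{c}$-approximation, since the approximation ratio would have to distinguish $3$ from $W$. The construction takes $\BigO(n^2)$ time, which is well below $n^{\omega-\smol}$, so a fast approximation would decide the simplicial vertex problem in $\BigO(n^{\omega-\smol})$ time.
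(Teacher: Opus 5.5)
Your proposal is correct and is essentially the paper's proof: the same four-layer DAG (adjacency edges between layers $1$--$2$ and $3$--$4$, non-adjacency edges between layers $2$--$3$, heavy edges $v_1 \to v_4$), with your weight $W = n^{c+1}$ playing the role of the paper's $3c+1$. Your hedge about possible ``leaks'' is unnecessary. Since $V_4$ has no out-edges, any path using a heavy edge is that single edge $v_1 \to v_4$. By the layering, the only light $v_1$--$v_4$ paths are $v_1 \to a_2 \to b_3 \to v_4$ with $a,b \in N(v)$ distinct and nonadjacent, which is exactly the argument the paper gives.
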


The result holds even if the approximation guarantee is a (at most singly exponential) function of the input graph size.
That is, we also rule out $\log n$ or even factor-$n$ approximations.

This means that the optimal combinatorial algorithm is running Dijkstra's algorithm from all vertices (which will calculate $\tdOp$ exactly).
The standard algebraic $(1+\smol)$ All-Pairs-Shortest-Paths approximation~\cite{Zwick2002} can be used to approximate $\tdOp$.
As we only care about the maximum finite distance, we can get rid of the dependency on the largest edge weight using a rounding technique, getting a $\Oish(n^\omega / \smol)$ running time bound. This was previously observed for standard diameter in~\cite{Bringmann2019}.
We show the same hardness bound based on a different hardness assumption, namely, the high-dimensional orthogonal vectors problem. Hence we put the hardness result on a more secure footing in case one of these conjectures turn out to be false. 

\begin{restatable}{theorem}{thmLbHdov}
    \label{thm:lb:hdov}
    Suppose $\omega > 2$ and consider any $\epsilon >0$. Unless high-dimensional OV fails, there is no algorithm with $\text{poly}(n)$-approximation for \(\tdOp\) on weighted graphs running in time \(\BigO(n^{\omega-\smol})\).
\end{restatable}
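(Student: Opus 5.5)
The plan is to reduce the high-dimensional OV hypothesis to the problem of deciding whether the reachability diameter is zero or not, and to make the graph weighted in such a way that any multiplicative approximation must solve this decision problem. I will use the standard form of the hypothesis from \Cref{sec:hypotheses}: for $\omega>2$ and $d=n^{\delta}$ (with $\delta$ chosen below), OV on $n$ vectors in $\{0,1\}^d$ cannot be solved in time $\BigO(n^{\omega-\smol})$. Fast rectangular matrix multiplication solves such instances roughly in time $\mathrm{MM}(n,d,n)$, which tends to $n^{\omega}$ as $d$ grows, and the hypothesis asserts this is essentially optimal.

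\textbf{Construction.} Given $A,B\subseteq\{0,1\}^d$ with $|A|=|B|=n$, build a three-layer DAG.
\begin{itemize}
\item The first layer has a vertex $a$ for each $a\in A$.
\item The middle layer has a vertex $c$ for each coordinate $c\in[d]$.
\item The last layer has a vertex $b$ for each $b\in B$.
\item Add an edge $a\to c$ whenever $a_c=1$, and an edge $c\to b$ whenever $b_c=1$, all of weight $0$ (or weight $1$ to keep weights positive; see below).
\end{itemize}
Then $b$ is reachable from $a$ through the middle layer exactly when $a$ and $b$ are \emph{not} orthogonal. To punish orthogonal pairs, also add a direct edge $a\to b$ of huge weight $W$ for every pair. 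This direct edge costs $n^2$ edges, which is too many, so I instead route the fallback through a single hub vertex $h$: add $a\to h$ with weight $W$ and $h\to b$ with weight $0$ for all $a,b$. Now every pair $(a,b)$ is reachable. Its distance is small ($\le 2$) if the pair is non-orthogonal, and equals $W$ if it is orthogonal. Pairs involving middle vertices or $h$ have distance at most $2$, or $0$ when weights are zero. Hence $\td{G}\le 2$ if there is no orthogonal pair and $\td{G}\ge W$ otherwise. Choosing $W$ larger than $2\alpha(n)$, with weights of $\BigO(\log n)$ bits or even singly exponential size, lets any $\alpha$-approximation distinguish the two cases. This gives the statement for poly, or even exponential, approximation factors.

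\textbf{Size and the obstacle.} The graph has $N=2n+d+1=\Theta(n)$ vertices when $d\le n$. An $\BigO(N^{\omega-\smol})$ algorithm would therefore solve OV with $d=n^{\delta}$ in time $\BigO(n^{\omega-\smol})$, contradicting the hypothesis. The main point I need to check is that the hypothesis, as formulated in \Cref{sec:hypotheses}, is stated for dimension $d\le n$ (or $d=n^{\delta}$ with $\delta<1$) with an $n^{\omega-\smol}$ lower bound. This is where the assumption $\omega>2$ enters, since for small $d$ OV is easier, and so the parameter regime must be chosen accordingly. If the hypothesis is instead stated with $d=n$ and running time $n^{\omega}$, the construction still has $\Theta(n)$ vertices and the argument is unchanged.

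A secondary detail is strict positivity of weights. Replacing the weight-$0$ edges by weight-$1$ edges changes the non-orthogonal distances to at most $3$, and choosing $W>3\alpha$ still separates the cases. The paper's model might also require the graph to contain pairs of unbounded distance or forbid zero weights; these are handled by the same scaling.
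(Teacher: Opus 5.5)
\textbf{Gap: the hub vertex breaks the reduction.} In your graph, the only edges entering $h$ are the edges $a\to h$ of weight $W$, and vertices of $A$ have no incoming edges. So for every $a\in A$ the pair $(a,h)$ is reachable with $d(a,h)=W$. Hence $\td{G}\ge W$ whether or not an orthogonal pair exists, and your claim that pairs involving $h$ have distance at most $2$ is false.

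Moving the weight to the $h\to b$ edges, or splitting it as $x+y$ with $x+y\ge W$, does not help. The hub edges are present in every instance, so $d(a,h)=x$ and $d(h,b)=y$ always. Since $\max(x,y)\ge W/2$, you get $\td{G}\ge W/2$ even when no orthogonal pair exists. A shared intermediate vertex on the fallback route can therefore produce at most a factor-$2$ gap, never a $\mathrm{poly}(n)$ gap. This is the triangle-inequality obstacle the paper discusses in its open problems, and it forces the fallback connection to have no intermediate vertices at all.

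\textbf{The fix is what the paper does.} Keep weight-$1$ layer edges and add a direct edge $a\to b$ of weight $W=2c+1$ for every pair $(a,b)\in A\times B$. Then non-orthogonal pairs are at distance $2$, every other reachable pair is at distance at most $2$, and an orthogonal pair is at distance exactly $W$, since the direct edge is its only path.

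Your concern that $n^2$ edges is too many is unfounded. The target running time is $\BigO(n^{\omega-\smol})$, and for $\smol\le\omega-2$ the $\Theta(n^2)$ construction cost is within that bound. Larger $\smol$ follows trivially, because an $\BigO(n^{\omega-\smol})$ algorithm is also $\BigO(n^{\omega-\smol'})$ for every $\smol'<\smol$.

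This is precisely where $\omega>2$ is used, not in choosing a dimension regime $d=n^{\delta}$. With the paper's hypothesis $d=\Theta(n)$, the dense graph has $\Theta(n)$ vertices and $\BigO(n^2)$ edges, so a $c$-approximation in time $\BigO(n^{\omega-\smol})$ decides HD-OV in the same time.
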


Hence, under these fine-grained assumptions, the straightforward approximate APSP upper bound matches the lower bound on worst-case instances. It is therefore natural to see if the problem remains hard in certain graph classes.
In particular, we take a parameterized view at the fine-grained complexity of the problem.
This leads us to near-linear constant-factor approximation algorithms given one of these parameters is bounded.
For these graph classes, we thus side-step the general impossibility of constant-factor approximations in less than matrix-multiplication time.

First, for DAGs, we show that if the \emph{width} (the size of the minimum number of chains required to cover all vertices) is bounded, we can compute a $3$-approximation in almost-linear time. 

\begin{restatable}{corollary}{thmMccApprox}\label{cor:mcc_approx}
    There is an algorithm that, given a DAG $G$, computes a 3-approximation for the \(\tdOp\) in time \(\OTilde(\setsize{\mcc} m + m^{1+o(1)})\), where \(\setsize{\mcc}\) is the width of $G$.
\end{restatable}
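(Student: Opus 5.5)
We first compute a minimum chain cover $\mcc$ of the DAG, i.e., a collection of $\setsize{\mcc}$ vertex-disjoint chains such that every vertex lies on exactly one of them and consecutive vertices of a chain are related by reachability. Via the standard reduction to min-cost flow, this takes $m^{1+o(1)}$ time using the almost-linear max-flow/min-cost-flow algorithms; this accounts for the $m^{1+o(1)}$ term. For each chain $C=(c_1,\dots,c_k)$, with $c_i$ reaching $c_{i+1}$, we then run a multi-source "chain Dijkstra" in time $\OTilde(m)$. In the forward direction it computes, for every vertex $v$, the value $f_C(v)=d(c_{i(v)},v)$, where $i(v)$ is the largest index such that $c_{i(v)}$ reaches $v$. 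In the backward direction it computes, for every $u$, $g_C(u)=d(u,c_{j(u)})$, where $j(u)$ is the smallest index such that $u$ reaches $c_{j(u)}$. Alongside these we also record the maximum finite distance between chain vertices that are near each other. Over all chains this costs $\OTilde(\setsize{\mcc}\,m)$.

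The algorithm outputs the maximum, over all chains and vertices, of the finite values $f_C(v)$, $g_C(u)$, and $d(c_i,c_{i'})$ for reachable chain pairs. Every output value is a true finite distance, so the estimate is at most $\tdOp$.

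For the lower bound, fix the pair $(u,v)$ attaining $\tdOp=D$. Take the chain $C$ containing $u$, say $u=c_a$. Since $u$ reaches $v$, we have $i(v)\ge a$. Let $w=c_{i(v)}$. Because $w$ lies on the chain after $u$, $u$ reaches $w$, and by the triangle inequality
\[ D=d(u,v)\le d(u,w)+d(w,v). \]
Here $d(w,v)=f_C(v)$ is reported. Bounding $d(u,w)$ is the main obstacle, since it is a distance along the chain's reachability and is not directly reported.

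To handle it, we also compute, for every chain vertex $x=c_a$, the quantity $d(x,c_{i(v)})$ in a useful form. One route: apply the triangle inequality once more through a vertex of a second chain, or note that $d(u,w)$ is itself a finite distance between two vertices of the same chain. We therefore want the estimate to include $\max_{a\le b} d(c_a,c_b)$ over pairs that are "last reachers". Computing all of these exactly is too expensive, so instead we use the backward quantity. Let $j=j(u)$ with respect to the chain $C'$ containing $v$, and let $y=c'_j$. Then
\[ d(u,v)\le d(u,y)+d(y,v), \]
and $d(y,v)$ is a same-chain distance. The plan is to show that a single sweep per chain reports, for each chain vertex, its farthest reachable successor on the same chain, and that consecutive last-reacher relations compose in such a way that each of the pieces is at most the reported maximum. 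This yields $D\le 3\cdot\text{estimate}$: one term from $f$, one from $g$, and one same-chain term. The delicate step is the same-chain term, which is precisely where the factor $3$ arises.
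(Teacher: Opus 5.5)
There is a genuine gap. Your first half matches the paper. The values $f_C(v)=d(c_{i(v)},v)$ are exactly what Lemma~\ref{lem:chain-bfs} computes: Dijkstra from $c_k, c_{k-1},\dots$ in turn, each time deleting already-visited vertices. Your inequality $D\le d(u,w)+f_C(v)$ with $w=c_{i(v)}$ is also the decomposition the paper uses. However, the term $d(u,w)$ is a same-chain distance, and you never give a way to bound it by a value the algorithm reports. You rightly say that computing $\max_{a\le b}d(c_a,c_b)$ exactly is too expensive, but what replaces it is only a plan. The ``single sweep per chain'' that reports every chain vertex's farthest reachable successor on the chain is not described, and it is essentially an all-pairs problem on the chain. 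The $g$-values do not help either. For $u$ on $C$ itself, $g_C(u)=0$. For the second chain $C'$, you again get $d(u,y)+d(y,v)$, where $d(y,v)$ is an uncontrolled same-chain distance on $C'$. So the accounting ``one term from $f$, one from $g$, one same-chain term'' is not backed by any inequality. Every decomposition you write has two terms, one of which no reported value controls.

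The missing ingredient is an $\OTilde(m)$-time $2$-approximation of the intra-chain diameter $\max_{a\le b}d(c_a,c_b)$. The paper obtains it by divide and conquer along a topological order (Lemma~\ref{lem:chain-2-approx}):
\begin{enumerate}
\item Discard vertices before $c_1$ or after $c_k$, and split the remaining topological order into two halves.
\item Let $c_s$ be the last chain vertex in the first half.
\item Run a forward and a backward Dijkstra from $c_s$, and record the largest finite distance to or from chain vertices.
\item Recurse on both halves with the corresponding subchains.
\end{enumerate}
Each subinstance is an interval of the topological order, so it contains every path between two of its vertices and distances are preserved. For the extremal pair $c_a\le c_b$, consider the first level where $a\le s\le b$. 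There $d(c_a,c_b)\le d(c_a,c_s)+d(c_s,c_b)\le 2\max\{d(c_a,c_s),d(c_s,c_b)\}$, and both terms are reported. This costs $O(m\log n)$ per chain. With it, $D\le d(u,w)+f_C(v)\le 2\hat D+\hat D=3\hat D$, where $\hat D$ is the maximum reported value. The factor $3$ is exactly an exact $f$-term plus a $2$-approximate same-chain term, and the $g$-values are unnecessary.
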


The width is a standard parameter for studying problems related to reachability in DAGs.
For example, they are also widely considered for studying shortcut sets \cite{DBLP:conf/soda/KoganP23,DBLP:conf/esa/KoganP24}.
The algorithm underlying our result is inspired by the chain-cover-based approaches widely used in recent advancements in that area.
Crucially, for a chain $C$, we can approximate the diameter among all demand pairs that have a shortest path passing through $C$ in $\Oish(m)$ time.

As a second parameter, we look at the treewidth of the underlying undirected graph of the input directed graph.
This parameter was originally considered for diameter problems by Abboud, Vassilevska Williams, and Wang \cite{abboud_approximation_2016} in the context of fixed parameter subquadratic algorithms.
We share their motivation of exploring structural parameters with the intent of overcoming barriers from fine-grained complexity.
In fact, we can lightly adapt one of their algorithms for the classic diameter definition to compute the $\tdOp$ exactly in almost-linear time for graphs of bounded treewidth.
We first give a much simpler version of this algorithm that already yields a 2-approximation of $\tdOp$.

\paragraph{Exact hopsets.} We then adapt this algorithm to compute shortcut sets and \emph{exact} hopsets that have constant hopbound and linear size for graphs of bounded treewidth.
At its core, this result relies on the fact that any graph of bounded treewidth will recursively have small balanced separators.
This yields a recursive technique where the number of shortcut edges relates to the size of these balanced separators.
Formally, we show the following result.

\begin{restatable}{theorem}{thmTwHopset}
  \label{thm:tw-hopset}
    There is an algorithm that given a directed graph $G$ of treewidth $\mathrm{tw}$, computes an exact hopset of hopbound $2$ of size $\OTilde(n \cdot \mathrm{tw})$ in time $\OTilde(m \cdot \mathrm{tw} + m^{1 + o(1)})$.
\end{restatable}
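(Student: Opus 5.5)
We build the hopset by recursive separator-based divide and conquer, following the approach of \cite{abboud_approximation_2016}. A graph of treewidth $\tw$, and every subgraph of it, has a balanced separator $S$ of size $\BigO(\tw)$ in its underlying undirected graph. Removing $S$ splits the vertex set into parts $A$ and $B$ with no edges between them and $|A|,|B|\le \frac{2}{3}|V|$ (up to constants). Such a separator, or a tree decomposition of width $\BigO(\tw)$ from which separators can be read off, can be computed in almost-linear time. Taking this as a black box, the recursion tree has depth $\BigO(\log n)$, and the parts at each level are vertex-disjoint apart from the separators.

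At a recursion node with current vertex set $U$ and separator $S$, we do the following for every $s\in S$. Run Dijkstra from $s$ and to $s$ (on reversed edges) inside the induced subgraph $G[U]$. For every $v\in U$ reachable from $s$, add the edge $(s,v)$ with weight $d_{G[U]}(s,v)$. For every $u$ that reaches $s$, add $(u,s)$ with weight $d_{G[U]}(u,s)$. Then recurse on $G[A]$ and $G[B]$.

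Each vertex lies in $\BigO(\log n)$ recursion nodes, and each node adds $\BigO(\tw)$ edges per vertex it contains. This gives size $\OTilde(n\cdot\tw)$. The time is $\OTilde(m\cdot \tw)$ for the Dijkstra runs plus the separator computation, which totals $\OTilde(m\tw+m^{1+o(1)})$.

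Correctness has two parts.
\begin{itemize}
\item Every added edge has weight equal to the length of an actual path in $G$, so distances never decrease.
\item For the hopbound, fix $u,v$ with $v$ reachable from $u$, and a shortest $u$-$v$ path $P$ in $G$. Consider the deepest recursion node whose vertex set $U$ contains all of $P$; the root contains $P$, so such a node exists. At that node, $P$ is not contained in $A$ or in $B$. Since there are no edges between $A$ and $B$, $P$ must intersect $S$ at some vertex $s$ (this also covers the case where $u$ or $v$ itself lies in $S$). Both subpaths $u\to s$ and $s\to v$ lie in $G[U]$, so $d_{G[U]}(u,s)+d_{G[U]}(s,v)=d_G(u,v)$. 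The hop edges $(u,s)$ and $(s,v)$ therefore form a 2-hop path of exact length. If $s=u$ or $s=v$, a single hop suffices.
\end{itemize}

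The main technical obstacle is computing the separators within the stated time: we need balanced separators of size $\BigO(\tw)$ (or $\BigO(\tw\log n)$, absorbed by the $\OTilde$) at every recursion node. Computing treewidth exactly is hard, so we would invoke an almost-linear approximate tree decomposition, for example via the almost-linear max-flow-based balanced separator routines, which is where the $m^{1+o(1)}$ term comes from. Alternatively, we compute one approximate decomposition up front and use centroid-bag separators, which keeps the recursion depth logarithmic. If only $\BigO(\tw\log n)$-approximate separators are available, the polylogarithmic loss is hidden in $\OTilde$. A second point to check is weighted-graph bookkeeping, namely that Dijkstra is restricted to $G[U]$. The argument above handles this, because the chosen node is one where the whole path $P$ survives.
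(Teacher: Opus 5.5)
Your proof is correct, but it organizes the recursion differently from the paper.

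\textbf{The paper's route.} The paper reuses its diameter algorithm (\Cref{algo:tw-2approx}) and records the Dijkstra distances from and to each separator vertex as hop edges. There, each child instance is $G[C\cup S]$: the separator is copied into every component and augmented with edges $s\to s'$ of weight $d(s,s')$. This makes every subinstance preserve all distances exactly. The hopbound then comes from a three-case analysis: an endpoint lies in $S$; or $u,v$ land in the same child, where their distance is unchanged; or $u,v$ are separated, so a shortest path meets $S$.

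\textbf{Your route.} You recurse on the vertex-disjoint induced subgraphs $G[A],G[B]$ with no auxiliary edges. You then charge each pair to the deepest node whose vertex set contains a whole shortest path $P$. Since $P\subseteq U$ there, the two needed distances in $G[U]$ are automatically exact. Inflated distances elsewhere do no harm, because hop edges only need weights that are lengths of real paths.

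\textbf{What each approach buys.} Yours gives a cleaner correctness argument that needs no distance-preservation invariant. Its size and time accounting is also immediate: each vertex lies in $\BigO(\log n)$ disjoint nodes. Nothing is replicated across possibly many components, and you never need to check that the tree decomposition survives the added $S$--$S$ edges. The paper's version buys that the same run also yields the diameter estimate of \Cref{thm:treewidth}. In your recursion $d_{G[U]}(s,x)$ can exceed $d_G(s,x)$, so reporting the largest Dijkstra distance could overestimate $\tdOp$. Avoiding that overestimate is exactly why the paper inserts the $S$--$S$ edges.

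\textbf{One detail to fix.} Specify the base case so that a leaf either takes $S=U$ or is a single vertex. Otherwise the ``deepest node containing $P$'' could be a leaf with no separator.
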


Compare this to the general bounds: a lower bound due to Bodwin and Hoppenworth shows that there are graphs where we cannot construct a $\BigO(n)$-size shortcut set with stretch better than $n^{1/4}$ \cite{DBLP:conf/focs/BodwinH23}.
For exact hopsets, they even show that there are graphs where there cannot be a $\BigO(n)$-size set with hopbound better than $\sqrt{n}$.

Due to the folklore connection between TC-spanners and shortcut sets, a result like \Cref{thm:tw-hopset} was already implied by a more complicated algorithm for $H$-minor-free graphs by \cite{DBLP:journals/siamcomp/BhattacharyyaGJRW12}. Hence we get a simpler algorithm for shortcut sets. For exact hopsets, this is the first such result.

Recently, Chalermsook, Jiang, Mukhopadhyay, and Nanongkai   \cite{DBLP:journals/corr/abs-2502-08032} raised the question of finding efficient algorithms computing better shortcut sets and TC-spanners in restricted graph classes, and explicitly suggested low-treewidth graphs as such a class to explore. 
We observe that the result in [BGJ+] implicitly gives such improved bounds for bounded tree-width graphs, which is also implied by our simplified approach. Moreover, we are addressing this question for hopsets.
Both for shortcut sets and exact hopsets, these provide small constant stretch in the most-commonly studied near-linear size regime.

\begin{table}[tp]
    \centering
    \begin{tabular}{clllll} \toprule
         \textbf{Weighted?} & \textbf{Approx.} & \textbf{Time lower bound} & \textbf{Density} & \textbf{Hypothesis} & \textbf{Ref.}  \\ \midrule
         \checkmark & $\text{poly}(n)$ & \(n^{\omega - o(1)}\) & Dense & SV & \cref{thm:lb:sv} \\
         \checkmark & $\text{poly}(n)$ & \(n^{\omega - o(1)}\) & Dense & HD-OV & \cref{thm:lb:hdov} \\
         \midrule
         x & \(<2\) & \(n^{3 - o(1)}\) & Dense & BMM & \cref{thm:lb:bmm}  \\
         x & \(<3/2\) & \(m^{2-o(1)}\) & Sparse & OV & \cref{thm:lb:ov} \\
         \bottomrule
    \end{tabular}
    \caption{Overview of our lower bounds.}
    \label{tab:lb}
\end{table}

\subsection{Unweighted Graphs}
In unweighted graphs, based on standard hardness assumptions, we cannot obtain small approximation factors more efficiently than the $(1+\smol)$ approximation based on matrix multiplication.

\begin{restatable}{corollary}{corLbBmm}
    \label{thm:lb:bmm}
    Unless combinatorial BMM fails, there is no combinatorial better-than-2 approximation algorithm for \(\tdOp\) running in time \(\BigO(n^{3-\smol})\), even on unweighted DAGs.
    
    Similarly unless BMM fails, there is no better-than-2-approximation for $\tdOp$ running in time $\BigO(n^{\omega-\smol})$. 
\end{restatable}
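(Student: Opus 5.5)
We reduce from Boolean matrix multiplication (BMM) via the triangle-detection formulation. Concretely, we use the problem of deciding whether an $n\times n$ Boolean product $A\cdot B$ equals a given target, or more simply whether there exist $i,j$ with $(AB)_{ij}=1$ and $C_{ij}=0$. Equivalently, and most conveniently, we reduce triangle detection in a tripartite graph to the question ``is $\tdOp \le 2$ or is it $\ge 4$?'' in an unweighted DAG. Triangle detection is subcubic-equivalent to combinatorial BMM, and is also not solvable in $\BigO(n^{\omega-\smol})$ time under the BMM hypothesis. Any better-than-$2$ approximation distinguishes $\tdOp\le 2$ from $\tdOp\ge 4$, since $4/2=2$. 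We therefore aim for a gap of exactly a factor $2$, with both conclusions following from the same reduction because it is linear-size, combinatorial, and runs in $\BigO(n^2)$ time.

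\textbf{Construction.} Take a tripartite instance with parts $I,J,K$, edges $E_{IJ}$, $E_{JK}$, and $E_{IK}$. We want pairs $(i,k)$ that are \emph{non}-edges to be the ones at distance $2$. So we build a layered DAG with layers $I\to J\to K$ using edges from $E_{IJ}$ and $E_{JK}$, in which $d(i,k)=2$ iff some $j$ witnesses the path $i\to j\to k$, and add a long detour so that every pair $(i,k)$ is reachable at some bounded distance. To make the triangle pairs the far ones, we instead include, for each non-edge $(i,k)\notin E_{IK}$, a direct shortcut of length at most $2$, through a dedicated gadget that keeps the graph size $\BigO(n^2)$. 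For edges $(i,k)\in E_{IK}$, the only short route is via $J$. Then a triangle makes $d(i,k)=2$ with a path through $J$, which is the ``yes'' case we must separate from larger distances. The signs must be arranged so that the \emph{max} reveals the answer. I would therefore negate: use the complement relation on the middle layer, i.e.\ connect $i\to j$ iff $(i,j)\notin E_{IJ}$ is not what we need either. Instead, the clean choice is to make $d(i,k)=2$ for every pair except those $(i,k)\in E_{IK}$ with no common neighbor $j$, and to route those pairs via a detour of length $4$: $i\to a_i\to b\to c_k\to k$, where $b$ is a single hub. Detecting whether every edge of $E_{IK}$ lies in a triangle, meaning the max is $2$, versus some edge lying in no triangle, meaning the max is $4$, is exactly Boolean product verification and is subcubic-equivalent to BMM. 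Non-edge pairs $(i,k)$ get a direct $2$-path via extra vertices $x_{ik}$, using $\BigO(n^2)$ vertices, or else are handled by making them unreachable. Unreachability is preferable, and is achieved by taking the hub detour only for pairs in $E_{IK}$. This requires per-pair detour vertices $y_{ik}$ with $i\to y_{ik}\to z_{ik}\to w_{ik}\to k$, giving $\BigO(n^2)$ vertices.

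\textbf{Checking distances.} All other reachable pairs must have distance at most $2$; the main obstacle is exactly this. Pairs along the detour chains have distance at most $4$ internally, so we must shorten these. The plan is to replace each chain by a single middle vertex $y_{ik}$ but with unit edges of length $2$ simulated through shared layers, or alternatively to scale the gap: use distances $3$ versus $\ge 6$ with detour chains of length $6$ whose internal pairs are at most $5<6$. Gap $6/3=2$ still works, since internal pairs on a path of length $6$ have distance at most $5$, except for the pair $(i,k)$ itself. Pairs from $i$ into the $J$--$K$ layers must also be at most $3$, which is ensured by padding the $I$--$J$--$K$ route with extra intermediate layers of length $3$ in total. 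This bookkeeping of all reachable pairs is where care is needed.

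\textbf{Size and conclusion.} The resulting graph has $N=\BigO(n^2)$ vertices only if per-pair chains are used, which would destroy the $n^3$ versus $N^3$ correspondence. The hardest step is thus to realize the detour with $\BigO(n)$ vertices. I would first try a shared-hub gadget in which $(i,k)\in E_{IK}$ is encoded by edges $i\to u_k$, with a path $u_k\to\dots\to k$ of length $5$ shared over $i$. Then $d(i,k)\le 6$ iff $(i,k)\in E_{IK}$, and $d(i,k)=3$ iff there is a triangle. Pairs $(i,k)\notin E_{IK}$ without a common $J$-neighbor are unreachable and thus harmless. Internal pairs on the shared chains are at most $5$, but pairs $(i,u'_k)$ with $u'_k$ a later chain vertex also need care. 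This gives $N=\BigO(n)$, so an $\BigO(N^{3-\smol})$ combinatorial or $\BigO(N^{\omega-\smol})$ algorithm with approximation below $2$ decides whether every $E_{IK}$ edge lies in a triangle, contradicting the respective BMM hypotheses.
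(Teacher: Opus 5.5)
There is a genuine gap: your construction never produces a factor-$2$ gap. Take the final version. For any $(i,k)\in E_{IK}$, let $c_4$ be the vertex preceding $k$ on the shared detour $u_k=c_0\to\dots\to c_4\to k$. Then $d(i,c_4)=5$ whether or not $(i,k)$ has a witness in $J$. So in the ``yes'' case $\tdOp\ge 5$ rather than $3$, and you only separate $5$ from $6$. Your check that internal pairs are at most $5<6$ compares against the wrong number: they must be at most the small value $3$. No extra shortcuts can repair this. Since $c_4\to k$ is an edge, $d(i,k)\le d(i,c_4)+1$, so making $d(i,c_4)\le 3$ forces $d(i,k)\le 4$. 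This is exactly the triangle-inequality barrier described in the paper's open-problems section. It equally breaks your length-$4$ hub detour and your per-pair detours, where the vertex before $k$ is at distance $3>2$ from $i$.

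Independently, the problem you reduce from is not known to be BMM-hard. ``Every edge of $E_{IK}$ lies in a triangle'' is only the $C\le AB$ half of Boolean product verification, and its negation has $\exists\exists\forall$ structure. For $E_{IK}=I\times K$ it is literally high-dimensional OV, which the paper treats as a separate hypothesis. Triangle detection corresponds to the other half, $AB\le C$. So even a working gadget would not yield the BMM-based statements.

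The missing idea is that distinguishing $\tdOp=1$ from $\tdOp=2$ already rules out better-than-$2$ approximations, and at distance $2$ there are no intermediate vertices to cause trouble. Your very first instinct was right, provided the shortcuts have length $1$:
\begin{itemize}
\item orient $E_{IJ}$ as $I\to J$ and $E_{JK}$ as $J\to K$;
\item add a direct edge $i\to k$ for every non-edge $\{i,k\}\notin E_{IK}$.
\end{itemize}
Every reachable pair is then adjacent, except pairs $(i,k)$ that have a common neighbour in $J$ but no direct edge. These pairs have $\{i,k\}\in E_{IK}$, i.e.\ they are exactly the triangles, and they are at distance exactly $2$. So this $3n$-vertex DAG has $\tdOp=2$ if $G$ has a triangle and $\tdOp\le 1$ otherwise. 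This is \Cref{thm:lb:bmm-1-vs-2}, from which the corollary follows directly. No sign flip is needed: the maximum already reveals the triangle.
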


This can be compared to the simple exact algorithm for unweighted $\tdOp$ in time $\OTilde(n^\omega)$ that binary searches the smallest power of the adjacency matrix that forms the transitive closure.

Interestingly, we can prove a lower bound for a different time/approximation tradeoff, based on the orthogonal vectors conjecture, which implies SETH.  In particular, we show that the $\tdOp$ approximation remains hard even in sparse unweighted DAGs.

\begin{restatable}{theorem}{thmLbOv}
    \label{thm:lb:ov}
    Unless OV fails, there is no better-than-3/2 approximation algorithm for \(\tdOp\) running in time \(\BigO(m^{2-\smol})\), even on unweighted DAGs.
\end{restatable}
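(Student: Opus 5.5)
We reduce from Orthogonal Vectors to deciding whether \(\tdOp\) of a sparse unweighted DAG is at most \(2\) or at least \(3\). Any better-than-\(3/2\) approximation separates these two cases, since \(3/2 \cdot 2 = 3\) and all distances are integers. The construction follows the classical layered OV-to-diameter reduction.

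\textbf{Construction.} Given an OV instance with sets \(A, B \subseteq \{0,1\}^d\), \(|A|=|B|=N\), \(d = c\log N\), build a four-layer DAG:
\begin{itemize}
\item layer \(L_1\), a copy of \(A\);
\item layer \(L_2\), the coordinates \(c_1,\dots,c_d\);
\item layer \(L_3\), a copy of \(B\).
\end{itemize}
Add an edge \(a \to c_i\) whenever \(a[i]=1\), and an edge \(c_i \to b\) whenever \(b[i]=1\). Then \(d(a,b)=2\) if \(a,b\) are non-orthogonal, and \(b\) is unreachable from \(a\) otherwise. That is the wrong way around: we need orthogonal pairs to be \emph{reachable} but \emph{far}. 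So add a slow bypass making every \(b\) reachable from every \(a\) at distance exactly \(3\):
\begin{itemize}
\item a hub vertex \(x\) with edges \(a \to x\) for all \(a\in A\);
\item a hub vertex \(y\) with an edge \(x\to y\) and edges \(y \to b\) for all \(b\in B\).
\end{itemize}
Now \(d(a,b) = 2\) if some coordinate is shared, and \(d(a,b)=3\) if \(a\perp b\). The graph has \(O(Nd)\) vertices and edges, i.e.\ \(m = N^{1+o(1)}\), and it is a DAG.

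\textbf{Checking the other reachable pairs.} We must verify that no other reachable pair has distance \(\ge 3\) when no orthogonal pair exists. This is the main obstacle. The pairs to check are:
\begin{itemize}
\item \(a\to c_i\), \(c_i\to b\), \(a \to x\), \(x\to y\), \(y \to b\): distance \(1\);
\item \(a\to y\): distance \(2\) via \(x\);
\item \(x \to b\): distance \(2\) via \(y\);
\item \(c_i\to b\) where \(b[i]=0\): unreachable, since \(c_i\) only has edges into \(B\);
\item \(a \to b\): distance \(2\) or \(3\) as computed above.
\end{itemize}
Hence \(\tdOp=3\) exactly when an orthogonal pair exists, and \(\tdOp\le 2\) otherwise.

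Two degenerate cases need care. First, if some vector is all-zero, then orthogonality with it is trivial; we handle this in preprocessing in linear time. Second, if some coordinate is zero in all vectors, the vertex \(c_i\) is isolated, which is harmless.

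\textbf{Conclusion.} An \(\BigO(m^{2-\smol})\) algorithm with approximation ratio \(\alpha<3/2\) outputs a value \(<3\) when \(\tdOp\le 2\), and a value \(\ge 3\) when \(\tdOp=3\). It therefore decides OV in time \(N^{2-\smol+o(1)}\), contradicting the OV conjecture. If the approximation is allowed to underestimate rather than overestimate, the same threshold argument applies with the inequalities reversed. The only care needed beyond this is making the bypass distance exactly \(3\) with no shortcut introduced by the hubs.
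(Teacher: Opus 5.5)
Your proposal is correct and is essentially the paper's proof: the same OV graph on $A \sqcup B \sqcup [d]$ plus the hub path $A \to x \to y \to B$, which has reachability diameter $3$ iff an orthogonal pair exists and $2$ otherwise, with $m = N^{1+o(1)}$ edges. Your pair-by-pair check and threshold argument spell out what the paper leaves as ``easy to verify''; the all-zero preprocessing is unnecessary, since such a vector already ends up at distance $3$ from every $b$ via the hubs.
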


Hence, our lower bounds for unweighted graphs are not as strong as the lower bounds for weighted graphs (w.r.t ~approximation factor). In \Cref{sec:open} we explain why it is technically challenging to obtain a stronger lower bound for unweighted graphs. 

On the other hand, we can directly exclude certain lower bounds by giving the following upper bounds.
We first use a standard sampling argument to get the following randomized result. 

\begin{restatable}{theorem}{thmRandomSampling}
    \label{thm:random-sampling}
    There is an algorithm that given a parameter \(k \in \N\),
    outputs an estimate \(\hat D\) such that with high probability, \(\td{G} - k \le \hat D \le \td{G}\).
    It always runs in time $\BigO(nm/k)$.
\end{restatable}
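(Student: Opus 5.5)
The plan is to sample a set of sources, run a BFS from each, and return the largest finite distance found. Every BFS distance is a true finite distance, so $\hat D \le \td{G}$ holds deterministically. Correctness therefore reduces to a hitting-set argument: some sampled source must lie near the start of a single fixed shortest path that realizes $\td{G}$. The delicate part is the running time, and I am not sure I can reach exactly $\BigO(nm/k)$; see the last paragraph.

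\emph{Algorithm.} Let $D = \td{G}$. If $k \ge n$, output $\hat D = 0$. This is valid because $D \le n-1 < k$, so $0 \ge D - k$, and it costs nothing. Otherwise sample each vertex independently with probability $p = \min(1, c\ln n / k)$ to form $S$. To make the time bound unconditional, truncate $S$ to its first $\lceil 2pn\rceil$ elements in some fixed order. For every $s \in S$, run a forward BFS from $s$, record $e(s) = \max\{d(s,x) : x \text{ reachable from } s\}$, and output $\hat D = \max_{s\in S} e(s)$ (with $\hat D = 0$ if $S = \emptyset$). Each $e(s)$ is a finite distance $d(s,x)$ with $(s,x) \in \tc(G)$, so the upper bound $\hat D \le D$ always holds.

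\emph{Correctness.} Fix a pair $(u,v) \in \tc(G)$ with $d(u,v) = D$ and a shortest path $u = w_0, w_1, \dots, w_D = v$. If $D < k$, any output $\hat D \ge 0$ is already at least $D - k$, so assume $D \ge k$. Then the prefix $w_0,\dots,w_{k-1}$ consists of $k$ distinct vertices. If some $w_i$ with $i < k$ is sampled and survives truncation, then because subpaths of shortest paths are shortest paths,
\[
e(w_i) \ge d(w_i, v) = D - i > D - k .
\]
The probability that no prefix vertex is sampled is $(1-p)^k \le e^{-c \ln n} = n^{-c}$. By a Chernoff bound, truncation is triggered with probability at most $n^{-\Omega(c)}$ when $pn = \Omega(\log n)$; in the remaining case $p = 1$ and $|S| = n$, so truncation never removes anything. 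A union bound then gives $D - k \le \hat D \le D$ with high probability. Only one witness pair is needed, so no union bound over pairs is required.

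\emph{Running time and main obstacle.} Each BFS costs $\BigO(m)$, so the total is $\BigO(|S|\, m) = \BigO(nm \log n / k)$. That is a $\log n$ factor above the stated $\BigO(nm/k)$, and removing it is the step I expect to be hardest. A constant sampling rate $p = c/k$ would give the right time but only constant success probability.

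My first idea is to make a single "trial" cheaper than one BFS per sampled vertex. I would process sources in batches of $\Theta(\log n)$ and replace the individual BFSs by one word-parallel BFS, keeping at each vertex a bitmask of the batch sources that have reached it. The difficulty is that different sources reach a vertex at different depths. A naive layered implementation then rescans an edge every time its tail's mask gains new bits, so the saving is not automatic. I would need to argue that the total mask-update work per batch is $\BigO(m)$ words.

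If that fails, my fallback is to split the witness prefix into $\Theta(\log n)$ disjoint blocks and sample at rate $\Theta(1/k)$ per block. This is only useful if the failure events for different blocks can be made independent without adding a $\log n$ factor to the number of BFS runs, and I am not sure that can be arranged. If neither route works, the analysis above proves the theorem with running time $\OTilde(nm/k)$, consistent with the $\OTilde$ entry for this result in \Cref{tab:algos}.
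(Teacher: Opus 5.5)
Your proposal is correct and follows the paper's proof. Both sample vertices, run a forward BFS from each, report the tallest tree, and argue that whp some sample lands among the first $k$ vertices of one fixed diameter path; the upper bound holds because every reported value is a genuine finite distance. The $\log n$ overhead you flag is not removed in the paper either: its \Cref{algo:random-sampling} samples $\OTilde(n/s)$ vertices, and the $2n/k$ BFS runs counted in its proof would only give constant success probability. So what is actually established is the $\OTilde(nm/k)$ bound listed in \Cref{tab:algos}, which is exactly what your analysis (truncation step included) delivers, and you need not pursue the bit-parallel or block-splitting ideas.
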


While the additive term in the approximation will become polynomial if we want faster algorithms, as discussed, it seems very challenging to avoid such additive terms with standard techniques.  

Using ideas from the chain-cover-based approach used in the shortcut set literature mentioned above, we can derandomize this algorithm.
The core subroutine is similar to our $\mcc$-based result, but while $\setsize{\mcc}$ is a property of the input graph, here we use a chain-cover where we can control its size.
A smaller chain-cover must then tolerate a larger approximation error, giving the stated trade-off.

\begin{restatable}{theorem}{thmSamplingDerandomized}\label{thm:sampling_derandomized}
    There is a deterministic algorithm that given a DAG $G$ and a parameter $\ell$, outputs an estimate $\hat{D}$, such that $(\td{G} - \ell) / 3 \le \hat{D} \le \td{G}$ and runs in time $\OTilde(nm / \ell + m^{1 + o(1)})$.
\end{restatable}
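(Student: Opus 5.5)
We build a deterministic chain cover whose chains all have length about $\ell$, and then run on each chain the same path-based subroutine as in \Cref{cor:mcc_approx}. The idea is to trade the number of chains against their length. Concretely, we compute a set $\mathcal{C}$ of $\OTilde(n/\ell)$ vertex-disjoint paths (chains) in $G$. Every vertex $v$ that has a long shortest path through it should be within distance $\ell$ of some chain on any shortest path of length at least $\ell$. The cleanest version of this is an ``$\ell$-cover'': a set $S$ of $\OTilde(n/\ell)$ vertices hitting every shortest path with at least $\ell$ hops.

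To obtain such a set deterministically, we would first take any path cover, for example by repeatedly extracting long paths or using the $m^{1+o(1)}$ flow-based minimum chain cover, which accounts for the $m^{1+o(1)}$ term. We then cut every path into segments of $\ell/2$ consecutive vertices and take one representative per segment, or treat each segment as a short chain. If fixing a deterministic hitting set for all long shortest paths turns out to be hard, we fall back on the greedy/derandomized hitting-set construction applied to the $\ell$-hop neighborhoods. This derandomization is the step I expect to be the main obstacle. The chain-based fallback, cutting chains of the cover into pieces of length $\ell$, is the approach I would try first, since it avoids having to enumerate shortest paths.

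For the estimation step, for each chain $C$ we run the $\OTilde(m)$ routine from \Cref{cor:mcc_approx}. For every vertex $u$ it computes $d(u,C)$ to the first reachable chain vertex, and from every chain vertex it computes outward distances, using a single multi-source sweep along $C$ in topological order. It returns $\max_{u,v}$ of a quantity that is a true distance, so it never overestimates. Moreover, it is at least $d(u,v)/3$ whenever some shortest $u$-$v$ path passes through $C$. Summed over $\OTilde(n/\ell)$ chains, this gives time $\OTilde(nm/\ell)$.

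For correctness, fix a pair $(u,v)$ attaining $D=\td{G}$. If $D\le \ell$ the lower bound $(D-\ell)/3\le 0\le \hat D$ is trivial. Otherwise the shortest $u$-$v$ path $P$ has more than $\ell$ hops and hence contains a vertex of some chain (or cover vertex) $C$. Let $u'$ be the first vertex of $P$ on a chain and $w$ the chain vertex where the estimate is taken. Truncating $P$ by at most $\ell$ hops, namely the part before it first meets the cover, we get a pair whose distance is at least $D-\ell$ and whose shortest path meets $C$. The per-chain guarantee then yields $\hat D\ge (D-\ell)/3$. The upper bound $\hat D\le D$ holds because every reported value is a realized distance between a reachable pair.
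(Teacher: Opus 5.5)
\textbf{Gap: the cover construction.} This step is the heart of the theorem, and the concrete route you propose for it fails.

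Cutting a full path or chain cover into segments of length about $\ell$ cannot give $\OTilde(n/\ell)$ chains. The number of pieces is at least the number of chains in the cover, i.e.\ at least the width of $G$. Take a path $p_1\to\dots\to p_L$ with a private source $s_i\to p_i$ for each $i$, so $n=2L$. Then $\{s_1,\dots,s_L\}$ is an antichain, and the $L$ two-vertex chains $\{s_i,p_i\}$ form a minimum chain cover.
\begin{itemize}
\item Keeping every piece means running the $\OTilde(m)$ per-chain routine $n/2$ times, i.e.\ $\OTilde(nm)$ time.
\item Discarding short pieces leaves nothing that meets the long path, so the estimate is $0$ while $\td{G}=L$.
\item Taking one representative per segment is not a hitting set either. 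With representatives $s_i$, the path $p_1,\dots,p_L$ touches none of them.
\end{itemize}
The alternative, a deterministic hitting set for all $\ell$-hop shortest paths, is left open. The standard greedy derandomization needs an explicit family of such paths, and you do not show how to obtain one in $\OTilde(nm/\ell)$ time. Cutting also buys nothing: the routine of \Cref{thm:chain_3_approx} costs $\OTilde(m)$ regardless of chain length, so only the number of chains matters.

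\textbf{What is missing} is a \emph{partial} cover. You need at most $2n/\ell$ vertex-disjoint chains of arbitrary length, not necessarily covering $V$, such that every path of $G$ has at most $\ell$ vertices outside them. This is the $\ell$-chain cover of \Cref{def:chain-cover} with parameter $2n/\ell$, which the paper computes in $m^{1+o(1)}$ time via \Cref{lem:compute_l_cover}.

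Your idea of repeatedly extracting long paths also produces such a cover if you stop early. Do $n/\ell$ rounds, each time picking a chain that maximizes the number of still-uncovered vertices; this is a longest-path DP with $0/1$ vertex weights. If some path keeps $t$ uncovered vertices at the end, those vertices formed an available chain in every round. Hence $t\cdot n/\ell\le n$, so $t\le\ell$, in $O(nm/\ell)$ total time.

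With such a cover your truncation argument works, but state the per-chain guarantee correctly. \Cref{thm:chain_3_approx} only gives $\max_{a\in C,\,b\in V} d(a,b)\le 3\hat D$, i.e.\ it covers pairs \emph{starting} on $C$. It does not cover every pair whose shortest path passes through $C$: a single-vertex chain $w$ at the end of a long path $u\rightsquigarrow w\to v$ is a counterexample. Apply it to $(u',v)$, where $u'$ is the first covered vertex of $P$, so $d(u,u')\le\ell$. This gives $3\hat D\ge d(u',v)\ge D-\ell$, as in the paper.
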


Seeing these upper bound results, it is natural to ask whether we can improve our lower bound results to show that the trade-off between a $k$-approximation and $\BigO(nm/k)$ time is the best that one can hope for (at least combinatorially).
The following result, obtained by modifying a shortcut set algorithm due to Jambulapati, Liu, and Bernstein \cite{DBLP:conf/focs/LiuJS19}, shows such a lower bound trade-off does not hold. This also establishes a technical connection between shortcut set computation and reachability diameter.

\begin{restatable}{theorem}{thmLjs}
  \label{thm:ljs}
  There is an algorithm computing a $n^{1/2 + o(1)}$-approximation of $\tdOp$ in $\Oish(m)$ time.
\end{restatable}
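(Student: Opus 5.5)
The plan is to run the near-linear-time shortcut set construction of Jambulapati, Liu and Sidford~\cite{DBLP:conf/focs/LiuJS19} on $G$ (the unweighted case) and read the estimate $\hat D$ off the BFS computations that this construction already performs.

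First I reduce to a DAG. Contract strongly connected components in $\BigO(m)$ time. Inside each SCC the reachability diameter equals the classical diameter, so the folklore in/out-BFS from one vertex gives a $2$-approximation there. I will take the maximum of these SCC estimates with the estimate for the condensation, and charge the inaccuracy introduced by contraction to the final $n^{1/2+o(1)}$ factor. This is a crude step that I would double-check, perhaps by keeping the SCC-internal distances as edge weights of the condensation.

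Recall the structure of the construction. It recursively samples pivots $p$ inside a current vertex set $S$. For each pivot it computes forward and backward BFS inside $G[S]$, adds shortcut edges $(p,v)$ and $(u,p)$, and recurses on pieces. Each piece is a Boolean combination of sets $\mathrm{Des}_S(p)$ and $\mathrm{Anc}_S(p)$. The result is a set $H$ of $\OTilde(n)$ edges, produced w.h.p.\ in $\OTilde(m)$ total time, such that every pair in $\tc(G)$ is connected by a path in $G\cup H$ with at most $\beta=n^{1/2+o(1)}$ hops. I augment the algorithm so that each BFS also records depths. For each shortcut edge $(x,y)$ it stores $w(x,y)=d_{G[S]}(x,y)$, where $S$ is the piece in which the edge was created. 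The estimate is $\hat D=\max(1,W)$, where $W=\max_{(x,y)\in H} w(x,y)$. Recording depths does not change the running time.

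The key lemma is that every piece $S$ is \emph{path-convex} in the DAG. That is, if $x,y\in S$ and $z$ lies on some $x\to y$ path in $G$, then $z\in S$; consequently $d_{G[S]}(x,y)=d_G(x,y)$. I prove this by induction over the recursion.
\begin{itemize}
\item The base case $S=V$ is trivial.
\item If $S$ is convex, then $S\cap\mathrm{Des}(p)$ is convex: any $z$ on an $x\to y$ path with $x$ a descendant of $p$ is itself a descendant of $p$. The analogous argument works for $S\cap\mathrm{Anc}(p)$.
\item Now take $S\setminus(\mathrm{Des}(p)\cup\mathrm{Anc}(p))$ and suppose an intermediate $z$ were a descendant of $p$. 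Then $y$ would be a descendant of $p$, a contradiction. If $z$ were an ancestor of $p$, then $x$ would be an ancestor of $p$, again a contradiction. Mixed sets such as $\mathrm{Des}(p)\setminus\mathrm{Anc}(p)$ are handled the same way, using acyclicity.
\end{itemize}
This step is the main obstacle. I must check the exact list of piece types used by the construction, and I must make sure that the BFS is always performed inside a convex set rather than in some auxiliary graph.

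Given convexity, both bounds follow.
\begin{itemize}
\item \emph{Lower bound.} Every recorded $w(x,y)$ equals $d_G(x,y)$ for a reachable pair, so $W\le\td{G}$. Trivially $1\le\td{G}$ whenever $G$ has an edge, hence $\hat D\le\td{G}$.
\item \emph{Upper bound.} For any $(u,v)\in\tc(G)$, take a $G\cup H$ path with at most $\beta$ hops. Replace every shortcut hop $(x,y)$ by a shortest $G$-path of length $w(x,y)\le W$, and leave every original edge at length $1$. This gives $d_G(u,v)\le\beta\cdot\max(1,W)$.
\end{itemize}
Hence $\hat D\le\td{G}\le n^{1/2+o(1)}\hat D$ with high probability, in $\OTilde(m)$ time.
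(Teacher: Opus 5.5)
\textbf{The gap: the reduction to the condensation.} Contraction does not lose additively. Let $M$ be the largest SCC diameter and $R$ the reachability diameter of the condensation. Lifting a shortest condensation path only gives $d_G(u,v)\le R+(R+1)M$. For your two estimates $\hat W$ (condensation) and $\hat M$ (SCCs), the guarantees you actually have are $R\le n^{1/2+o(1)}\max(1,\hat W)$ and $M\le 2\hat M$. Combined, these bound $\td{G}$ only by roughly $n^{1/2+o(1)}\hat W\hat M$, which is a product rather than a maximum.

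The loss is real. Let $k=\sqrt n$ and take $k$ directed cycles $v_0\to\dots\to v_{k-1}\to v_0$. Join consecutive cycles by a single edge from $v_{k-1}$ of one to $v_0$ of the next. Then $\td{G}=n-1$, while every SCC diameter and the reachability diameter of the condensation equal $k-1<\sqrt n$. So contraction alone can consume a $\sqrt n$ factor even if both parts are estimated exactly. Nothing in your argument stops this from compounding with the $n^{1/2+o(1)}$ loss on the condensation, so ``charging it to the final factor'' is unjustified. The paper in fact poses a general-to-DAG reduction for $\tdOp$ as Open Problem~\ref{open:reduction-general-to-dag}, remarking that going through the DAG of SCCs loses substantially.

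\textbf{The fix: drop the step.} The construction of \cite{DBLP:conf/focs/LiuJS19} (cf.\ Algorithm~\ref{algo:ljs}) runs on general digraphs. It removes $\mathrm{Des}(p)\cap\mathrm{Anc}(p)$ for each pivot and splits the remaining vertices by reachability signature. Your convexity lemma needs no acyclicity:
\begin{itemize}
\item A signature that contains both $\mathrm{Des}(p)$ and $\mathrm{Anc}(p)$ is removed entirely.
\item So each surviving piece is the current (convex) piece intersected with sets of the form $\mathrm{Des}(p)$, $\mathrm{Anc}(p)$, or their complements.
\item Each such set is closed under successors or under predecessors, hence path-convex, and intersections of path-convex sets are path-convex.
\end{itemize}
This is exactly the paper's lemma preceding Corollary~\ref{cor:ljs-distance-preservation}. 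With it, your weight-recording and hop-lifting argument goes through verbatim on $G$.

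\textbf{Comparison with the paper.} Once repaired, your route is genuinely different. The paper whiteboxes Lemma~4.4 of \cite{DBLP:conf/focs/LiuJS19} (Lemma~\ref{lem:ljs-core-lemma}) to cut a diameter path into $n^{1/2+o(1)}$ subpaths, each straddling a pivot. It then uses pigeonhole, distance preservation, and the triangle inequality (Observation~\ref{lem:pivot-2-approx}), losing an extra factor $2$. You use only the end-to-end hopbound of the shortcut set plus exact distances of the shortcut pairs. Your $W$ is the same quantity as the paper's estimator, namely the deepest pivot BFS, but your analysis is modular. Any shortcut set of hopbound $\beta$ whose pairs come with their exact $G$-distances gives $\max(1,W)\le\td{G}\le\beta\max(1,W)$, and this avoids the factor $2$.
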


At its core, the algorithm relies on sampling pivot vertices and then building recursive subinstances based on the reachability relationship between each vertex and all pivots.
This recursion is lossy in the sense that the diameter might be split up into different subinstances (thus the maximum diameter of all subinstances might be less than that of the input graph).
The quality analysis now relies on tuning the recursion to have this bad event happen the fewest times while maintaining near-linear running time. 
\subsection{Open Problems and Technical Challenges}\label{sec:open}

We are left with two important open problems about the reachability diameter. The most natural one is on bridging the upper bound and lower bound gap for unweighted graphs, which we formally state below. Secondly, we discuss whether this problem can be generally reduced to DAGs.

\paragraph{Best Approximation for Unweighted Graphs.}
Our strong lower bounds for the weighted setting indicate that it is unlikely to find fast approximation algorithms there. In particular, by computing $(1+\epsilon)$-APSP (all pairs shortest paths) in time $\OTilde(n^\omega / \smol)$ we can essentially match the lower bounds in \Cref{thm:lb:hdov,thm:lb:sv}.

However, in the unweighted setting, the picture is not complete. The $\tdOp$ of a graph can be computed exactly with fast matrix multiplication and this cannot be improved polynomially due to the reduction to BMM. Furthermore, even better-than-2 approximations are ruled out in faster time.
On the other hand, we show that faster algorithms with polynomial approximations are possible (\Cref{thm:random-sampling} and \Cref{thm:ljs}).
 This motivates the following open question. 

\begin{open} \label{open:apx-unweighted} What is the best approximation/running time tradeoff for $\tdOp$ in unweighted graphs?

In particular, what is the best approximation possible in either linear-time or in better than $O(n^{\omega})$ time?
\end{open}

It is also possible to answer \Cref{open:apx-unweighted} by proving a stronger lower bound, such as the ones shown in \Cref{thm:lb:hdov,thm:lb:sv}. However, such a lower bound requires new techniques, as we run into a challenge that was previously called the ``triangle-inequality barrier''~\cite{Abboud2023DiameterApprox,Karthik2020,Rubinstein2018TriangleBarrier}.

Concretely, to prove a fine-grained lower bound for a $c$-approximation for $\tdOp$ we need to construct a graph for which the YES and NO instances have the following property: In one case they contain short paths of length at most $\ell$ for every reachable pair of vertices. In the other case, there should be a reachable pair of vertices $u,v$ with distance $c\cdot \ell$.

In a straight-forward way we could create a path of length $c\ell$ from $u$ to $v$ but shorter paths would exist in the first case to ensure a smaller distance. However, there must be intermediate vertices on this path as the graph is unweighted. They would introduce distances up to $c\ell - 1$, making this simple approach implausible. Overcoming this barrier to prove stronger hardness results could therefore reveal interesting techniques how to achieve the same for other distance problems.

\paragraph{Reducing General Directed Graphs to DAGs.}

Conceptually, the hardest instance for computing reachability diameter seems to be when the input is a DAG. One explanation is that our lower bound instances are all DAGs. Another intuition is that we can consider the DAG of SCCs, and run the algorithms for classical diameter on each component. However this does not directly lead to a reduction without losing substantially in the approximation.
Our next open problem aims to formalize this intuition that such a reduction exists.

\begin{open}\label{open:reduction-general-to-dag}
    Can we reduce $\tdOp$ in general directed graphs to $\tdOp$ in DAGs?
\end{open}

Such a reduction would immediately allow us to transfer our results based on chain-covers in \Cref{cor:mcc_approx} and \Cref{thm:sampling_derandomized} to general graphs. It would also allow us to restrict ourselves to DAGs in \Cref{open:apx-unweighted}.
Furthermore, all our hardness instances are DAGs, so for example \Cref{thm:lb:hdov,thm:lb:sv} imply that DAGs are as hard as general directed graphs for weighted approximation.

Recently Assadi, Hoppenworth, and Wein~\cite{dag_cover} proved a similar result for distances in directed graphs. They construct $\BigO(\log n)$ DAGs with $\OTilde(m)$ additional total edges (see also \cite{DBLP:journals/corr/abs-2509-23458}). Then, they guarantee that distance $d(u,v)$ in the original graph is poly-logarithmically approximated by the minimum distance from $u$ to $v$ in the collection of DAGs. Hence, for problems like directed approximate shortest paths with decently large approximation factor, we can assume without loss of generality that the given input graph is a DAG.

However, this result does not solve \Cref{open:reduction-general-to-dag}, since any individual DAG could have a much larger reachability diameter than the original graph. Therefore, we need a slightly different guarantee, and it would be interesting if it is possible to transfer the concept of DAG covers to reachability diameter approximation.

\iflong

\section{Preliminaries}\label{sec:prelims}
For \(n \in \N\), write $[n] \coloneqq \{1,\dots, n\}$.
For a vertex $v$ in some graph, we denote its neighborhood as $N(v)$.
In directed graphs, we write its out- and in-neighborhoods as $N^+(v)$ and $N^-(v)$, respectively.
Note that then $N(v) = N^+(v) \cup N^-(v)$.
The \emph{transitive closure} of a graph $\tc(G)$ is defined as the graph $(V(G), \{(u,v) \in V(G)^2 \mid u \text{ can reach } v \text{ in } G\})$.
A \emph{chain} is a path in the transitive closure of a graph.
In a directed graph $G$, a \emph{chain} is a sequence of vertices $v_1, \dots, v_\ell$ such that for all $i \in [n-1]$, we have $v_i \to v_{i+1} \in \tc(G)$.
Each DAG $G$ encodes a poset $\le_G$, that is, for vertices $u,v$ we write $u \le_G v$ if there is a path from $u$ to $v$ in $G$.
Where it is clear from context, we omit the subscript in $\le_G$.

The \emph{width} of a DAG is the size of its minimum chain cover, that is the minimum number of chains required to cover every vertex.
It is a common graph parameter used to study the complexity of problems on DAGs \cite{DBLP:conf/soda/KoganP23,DBLP:conf/esa/KoganP24,caceres_minimum_2023,DBLP:conf/esa/KoganP24a,DBLP:conf/soda/Obdrzalek06}.
Equivalently it can be defined as the size of the largest independent set in the transitive closure (these are also known as \emph{anti-chains}).
The \emph{treewidth} of an undirected graph is the minimum largest bag size of any tree-decomposition.
A tree decomposition of a graph $G$ is a tree $T$ where every vertex $X \in V(T)$ is associated with a subset of the vertices in $G$.
We call these sets \emph{bags} and treat the tree-vertices directly as their bags.
The following three properties hold: (1) for every $v \in V(G)$, there is a bag $X \in V(T)$ s.t.\,$v \in X$, (2) for every edge $\{u,v\} \in E(G)$, there is a bag $X \in V(T)$ s.t.\,$u,v \in X$, and (3) for all $v \in V(G)$, the bags of $T$ that include $v$ form a (connected) subtree of $T$.
Graph families with bounded treewidth include cactii, series-parallel graphs, and outerplanar graphs.
For a more complete discussion of treewidth, see \cite{cygan2015parameterized}.
While there are many attempted definitions of the treewidth of a directed graph, we will only use it to mean the treewidth of the underlying undirected graph \ie{the graph obtained by forgetting all edge directions}.
A \emph{$c$-balanced separator} in a graph $G$ is a set $S$ such that each connected component in $G - S$ has at most $c \setsize{V(G)}$ vertices.
We omit $c$ if it is some constant in $(0,1)$ that is independent of the graph and its size.

\subsection{Hypotheses for Conditional Lower Bounds}\label{sec:hypotheses}

We obtain our lower bound results by fine-grained reductions from more or less standard hypotheses from fine-grained complexity.
We dedicate this subsection to their definition as well as some context for each hypothesis

\paragraph{Boolean Matrix Multiplication.}
The combinatorial boolean matrix multiplication (BMM) hypothesis reads as follows.

\begin{hypothesis}[Combinatorial BMM]\label{hyp:bmm}
    Given vectors $A, B \in \{0,1\}^{n \times n}$, there is no combinatorial\footnote{``Combinatorial'' algorithms generally informally describe algorithms that avoid the use of fast matrix multiplication techniques. See for example~\cite{combinatorial_bmm_discussion} for a detailed discussion of the term and motivations behind combinatorial algorithms and lower bounds.} algorithm that computes $C \in \{0,1\}^{n \times n}$ with $C_{ij} = \bigvee_k A_{ik} \wedge B_{kj}$ in time $\BigO(n^{3 - \smol})$, for any $\smol > 0$.
\end{hypothesis}

Algebraically, that is, using fast matrix multiplication, combinatorial BMM can clearly be solved in time $\BigO(n^\omega)$ by a single matrix multiplication for $\omega < 2.372$~\cite{current_omega_value}. This is also the best general algorithm for BMM and it is generally assumed that no better algorithm exists~\cite{bergamaschi2021new,dalirrooyfard_approximation_2022}. This is why we also use BMM to prove conditional $n^\omega$ lower bounds.
Restricting our attention to combinatorial algorithms, a truly subcubic algorithm for triangle detection in tripartite graphs would imply the same for BMM~\cite{bmm_triangle_detection}. Hence, we use that problem for hardness results under \Cref{hyp:bmm}.

\paragraph{Orthogonal Vectors.}

The orthogonal vectors (OV) hypothesis is one of the most important hypotheses in fine-grained complexity as is was used to prove many lower bounds (see~\cite{fine_grained_survey} for a summary) and follows from the strong exponential time hypothesis~\cite{seth_implies_ov}. Most importantly for our case, it is a well-suited hypothesis for diameter lower bounds and has been used in different variants for standard diameter~\cite{first_diam_lower_bound,another_diam_lower_bound,bonnet_diam_hardness,li_diam_hardness} and $\MinDiam$~\cite{abboud_approximation_2016}. Our OV-based hardness result follows these previous hardness constructions.

\begin{hypothesis}[OV]
    Given sets $A, B$ of size $n$ of vectors in $\{0,1\}^d$, with $d \ge \omega(\log n)$ there is no algorithm that decides if there is $a \in A$ and $b \in B$ with $\sum_i a_i b_i = 0$ in time $\BigO(n^{2 - \smol})$, for any $\smol > 0$.
\end{hypothesis}

\paragraph{High-Dimensional OV.}

The high-dimensional OV hypothesis (HD-OV) considers the OV problem for higher dimensions, that is, we now have $d = \Theta(n)$. It can be thought of as an analogue of the OV hypothesis for dense graphs. In this setting, fast matrix multiplication produces a running time of $\BigO(n^\omega)$, a speedup over the brute-force-way to solve standard OV. 

This problem was previously used as the basis of a lower bound for $\MinDiam$ by Dalirrooyfard and Kaufmann~\cite{dalirrooyfardApproximationAlgorithmsMinDistance2022}. They showed that a better-than-$3/2$-approximation for $\MinDiam$ requires time $\Omega(n^{\omega - o(1)})$. 
\begin{hypothesis}[HD-OV]
    Given sets $A, B$ of size $n$ of vectors in $\{0,1\}^d$ with $d = \Theta(n)$, there is no algorithm that decides if there is $a \in A$ and $b \in B$ with $\sum_i a_i b_i = 0$ in time $\BigO(n^{\omega - \smol})$, for any $\smol > 0$.
\end{hypothesis}

\paragraph{Simplicial Vertex.}

The simplicial vertex problem (SV) is a graph problem that asks you to decide if there is a vertex whose neighborhood is a clique. It can again be solved in fast matrix multiplication time~\cite{sv_algorithm}.
The problem has been used as a basis for lower bounds for the clique cutset problem in~\cite{kratsch_simplicial_lb}, for lower bounds related to $n$-pairs diameter~\cite{dalirrooyfard_approximation_2022}, and for undirected diameter in structured graphs~\cite{ducoffe2022diameter}. 

\begin{hypothesis}[SV]
    Given an undirected, unweighted graph $G$, there is no algorithm that decides if there exists a vertex $v \in V(G)$ such that $N(v)$ is a clique in time $\BigO(n^{\omega - \smol})$, for any $\smol > 0$.
\end{hypothesis}

The usefulness of this hypothesis stems from two main properties. First is its quantifier structure of $\exists \forall$ or $\forall \exists$ when negated, which is the same structure that is necessary for upper bounding the diameter (for all vertex pairs, there is a short path). Secondly, the first quantifier quantifies over a set of size $n$ which is necessary to avoid a blow-up in the reduction in our case.

\section{Weighted Graphs}
\label{sec:weighted}

In this section we prove our results for weighted graphs. We start with strong lower bound for any approximation algorithm for $\tdOp$ in \Cref{sec:weighted-lb}. Our positive results are hence restricted to special graph classes. In \Cref{sec:chain-based}, we give an algorithm for DAGs of small width.
In \Cref{sec:treewidth}, we go on to give an algorithm for small-treewidth graphs and highlight connections to shortcut sets and hopsets.

\subsection{Lower Bounds}\label{sec:weighted-lb}

First, we prove our lower bound based on the simplicial vertex hypothesis. We use the hypothesis in a similar way to~\cite{dalirrooyfard_approximation_2022,ducoffe2022diameter}. See \Cref{sec:hypotheses} for a definition and discussion of hypotheses.

\begin{figure}[tp]
    \centering
    \begin{minipage}[t]{.48\linewidth}
        \centering
        \includegraphics*[page=3]{diameter-drawings.pdf}
        \subcaption{Lower bound construction from \Cref{thm:lb:sv}. The graph has $\tdOp$ 3 iff no vertex is simplicial, that is, the neighborhood of no vertex is a clique.}
        \label{fig:sv-lb-construction}
    \end{minipage}
    \hfill
    \begin{minipage}[t]{.46\linewidth}
        \centering
        \includegraphics*[page=4]{diameter-drawings.pdf}
        \subcaption{Lower bound construction from \Cref{thm:lb:hdov}. The graph has $\tdOp$ 2 iff there is no orthogonal pair.}
        \label{fig:hdov-lb-construction}
    \end{minipage}
    \caption{Our lower bound constructions for weighted $\tdOp$.}
\end{figure}

\thmLbSV*

\begin{proof}
    We show that such an approximation algorithm would imply the simplicial vertex hypothesis is false.
    Let $c = \text{poly}(n)$ be the approximation factor of that algorithm and let $G$ be an undirected input graph we are to determine the existence of a simplicial vertex in.
    
    As input to our $\tdOp$ approximation algorithm, create a graph $H$ as follows.
    \begin{itemize}
        \item  Let $V_1, V_2, V_3, V_4$ be four disjoint copies of $V(G)$ and set $V(H) \coloneqq V_1 \sqcup \dots \sqcup V_4$.
        \item For any edge $\{u,v\} \in E(G)$, create edges $v_1 \to u_2$, $u_1 \to v_2$, $v_3 \to u_4$, and $u_3 \to v_4$, where the indices indicate the copies in the sets $V_1$ to $V_4$.
        \item For any non-edge $\{u,v\} \not \in E(G)$, create edges $v_2 \to u_3$, and $u_2 \to v_3$.
    All these edges have weight 1.
        \item Additionally, for every $v \in V(G)$, create an edge $v_1 \to v_4$ with weight $3c + 1$.
    \end{itemize}

    We prove that this graph has a $\tdOp$ of $3$ if there is no simplicial vertex and $3c+1$ otherwise.
    Therefore a $\BigO(n^{\omega - \smol})$ $c$-approximation yields a $\BigO(n^{\omega - \smol})$ algorithm to decide the existence of a simplicial vertex.

    First, observe that for any pair of vertices in $H$ that are not the $V_1$ and $V_4$ copy of the same vertex in $G$, either no connection exists or the connection only uses at most three edges of weight 1.
    Thus, these pairs are either not reachable or have distance at most $3$.
    Therefore, we can ignore them for the rest of the analysis and focus on the distance between the $V_1$ and $V_4$ copies of the same vertex from $V(G)$.

    Now, assume that $G$ has a simplicial vertex $v$.
    Our goal is to show that the path consisting of the single edge $v_1 \to v_4$ is a shortest path with length $3c+1$ and thus $\td{H} \ge 3c+1$.
    Assume for the sake of contradiction, that there is a shorter path from $v_1$ to $v_4$.
    Then, since $H$ is a layered graph with 4 layers (except the direct edges between $V_1$ and $V_4$), this path must be of the form $v_1 \to a_2 \to b_3 \to v_4$, where $a_2 \in V_2$ and $b_3 \in V_3$.
    By construction of $H$, the edges $v_1 \to a_2$ and $b_3 \to v_4$ mean that $a,b \in N_G(v)$.
    The edge $a_2 \to b_3$ means that $\{a,b\}$ is \emph{not} an edge in $G$.
    Therefore, $v$ is not a simplicial vertex in $G$, contradicting our assumption and we can conclude that $d_H(v_1,v_4) = 3c+1$ and $\td{H} \ge 3c+1$, as desired.

    Now, assume that $G$ has no simplicial vertex.
    Let $v$ be an arbitrary vertex.
    Because $v$ is not a simplicial vertex, there are vertices $a,b \in N(v)$ such that $\{a,b\}$ is not an edge in $G$.
    Therefore $v_1 \to a_2 \to b_3 \to v_4$ is a path of length 3 in $H$ and thus $d_H(v_1, v_4) \le 3$.
    Since we chose $v$ arbitrarily, we have that $\td{H} \le 3$.
\end{proof}

Our next lower bound proves the same bound but based on the high-dimensional OV hypothesis instead. It hence strengthens the previous theorem.

\thmLbHdov*

\begin{proof}
    We use the standard OV graph construction to reduce high-dimensional OV to $\tdOp$ approximation. Assume there is an algorithm with approximation factor $c \in \mathbb{R}^+$ for $\tdOp$ in time $\BigO(n^{\omega - \smol})$ for some $\smol \in (0, \omega - 2]$. See \Cref{fig:hdov-lb-construction} for a drawing of the reduction.
    
    Let $A,B \subseteq \{0, 1\}^d$, with $|A|, |B| = n$ and $d \in \Theta(n)$. Create a graph as follows. \begin{itemize}
        \item  Create $V \coloneqq A \sqcup B \sqcup [d]$.
        \item For $a \in A$ and $i \in[d]$, create an edge $a \to i$ iff $a_i = 1$.
        \item Similarly, for $b \in B$, create an edge $i \to b$ iff $b_i = 1$.
        \item All of these edges have weight 1.
        \item Additionally, for all $a \in A, b \in B$ add an edge $a \to b$ of weight $2c+1$.
    \end{itemize}
    
    Note that between any two $a,b$ there now is a path of length 2 via $[d]$ if and only if $a$ and $b$ are \emph{not} orthogonal.
    Because of the last edges, an orthogonal pair of vectors has distance $2c+1$. On the other hand, $\tdOp > 2$ also implies that there is an orthogonal pair of vectors.
   
    Since the graph has $\BigO(n)$ vertices, $\BigO(n^2)$ edges, and $\smol \le \omega - 2$, the running time of the approximation algorithm dominates.
    Hence, a $c$-approximation in time $\BigO(n^{\omega - \smol})$ can also decide high-dimensional OV in the same time.
\end{proof}

\subsection{Chain-Based Algorithm}\label{sec:chain-based}
The width of a DAG $G$ gives a decomposition of $G$ into chains.
In the introduction, we discussed how the $\tdOp$ of a DAG is equivalent with its $\MinDiam$ if the DAG consists of a single path plus extra edges along that path.
This is equivalent to having width one.
Thus, we can use algorithms for $\MinDiam$ to compute a constant-factor approximation of $\tdOp$ in subcubic time for DAGs of width one.
This raises the natural question of whether we can extend this idea to larger widths.
Recall that the width is the size of the minimum chain cover.
The rough strategy is thus as follows:

\begin{enumerate}
    \item  Decompose $G$ into chains.
    \item Use $\MinDiam$-inspired techniques to approximate $\tdOp$ between vertices on the same chain (will be \Cref{lem:chain-2-approx}).
    \item  Use additional ideas to approximate $\tdOp$ between vertex pairs on different chains (will be \Cref{lem:chain-bfs}).
    \item Combine this information (will be \Cref{thm:chain_3_approx}).
\end{enumerate}

In this section, we will see how to implement this strategy.
We can achieve step (1) using a result by Cáceres.
\begin{theorem}[\cite{caceres_minimum_2023}]
    There is an $\BigO(m^{1+o(1)})$ time algorithm that computes a minimum chain cover of a DAG. 
    \label{thm:mcc-alg}
\end{theorem}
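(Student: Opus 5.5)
Since this is a cited result, I will only sketch how I would prove it. The idea is to reduce minimum chain cover to a minimum-flow problem with lower bounds, and then solve that with the almost-linear-time min-cost flow algorithm of Chen, Kyng, Liu, Peng, Probst Gutenberg and Sachdeva.

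\textbf{Building the flow network.} Given the DAG $G$, I build a network $N$ by splitting every vertex $v$ into $v^{\mathrm{in}}$ and $v^{\mathrm{out}}$. The network has the following arcs:
\begin{itemize}
\item $v^{\mathrm{in}}\to v^{\mathrm{out}}$ with lower bound $1$ and capacity $n$;
\item $u^{\mathrm{out}}\to w^{\mathrm{in}}$ with lower bound $0$ and capacity $n$, for every edge $u\to w$ of $G$;
\item $s\to v^{\mathrm{in}}$ and $v^{\mathrm{out}}\to t$ for every $v$, with capacity $n$, for a new source $s$ and sink $t$.
\end{itemize}

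\textbf{Correctness of the reduction.} I would check both directions.
\begin{itemize}
\item Any integral feasible $s$--$t$ flow of value $k$ decomposes into $k$ paths in $G$, since $N$ is acyclic. These paths together visit every vertex. A path in $G$ is in particular a chain, so choosing for each vertex one path through it gives a chain cover of size $k$.
\item Conversely, take a chain cover with $k$ chains. Each chain $v_1,\dots,v_\ell$ lifts to a path in $G$ by concatenating paths $v_i\leadsto v_{i+1}$. Routing one unit of flow along each lifted path gives a feasible flow of value $k$. Vertices may be revisited, which is why the capacities are $n$ rather than $1$; a single chain cannot pass through a vertex more than once, and $k\le n$.
\end{itemize}
Hence the minimum feasible flow value equals the width of $G$. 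Since the min-flow formulation is a min-cost flow with lower bounds and polynomially bounded integral capacities on $\BigO(m)$ arcs, the almost-linear algorithm returns an integral optimal flow $f$ in $m^{1+o(1)}$ time. Concretely, I would put cost $1$ on the $s$-arcs, with lower bounds handled by the standard reduction to circulation. An alternative is to start from the trivial feasible flow of value $n$ (one unit $s\to v^{\mathrm{in}}\to v^{\mathrm{out}}\to t$ per vertex) and cancel as much as possible by a max-flow from $t$ to $s$ in the residual network.

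\textbf{Extracting the chains (main obstacle).} The remaining step, which I expect to be the main difficulty, is turning $f$ into an explicit chain cover within the time bound. A naive path decomposition costs $\Theta(\text{width}\cdot n)$ in the worst case. However, I only need a label $c(v)\in[k]$ for each vertex such that each label class is totally ordered by $\le_G$; I do not need the paths themselves. I would compute such labels by decomposing $f$ in topological order with dynamic trees (link-cut trees), in the style of Sleator--Tarjan flow decomposition:
\begin{itemize}
\item maintain, for each unit of flow currently entering a vertex, a representative path identifier;
\item at each vertex, assign it to one of its incoming units;
\item split the incoming units among the outgoing arcs according to their flow values, merging or splitting bundles of units in $\BigO(\log n)$ amortized time per arc.
\end{itemize}
This takes $\BigO(m\log n)$ time overall. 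It yields the correct property because two vertices labeled by the same unit lie on one path of the decomposition and hence are comparable. If the bookkeeping of unit bundles turns out to be delicate, my fallback is to decompose only the support of $f$ restricted to a spanning structure per unit value, which is again linear up to logarithmic factors. Altogether the running time is $\BigO(m^{1+o(1)})$.
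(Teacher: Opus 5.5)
Your sketch is correct and follows essentially the route of the cited work of C\'aceres, which the paper invokes without proof. That route computes a minimum flow with one almost-linear max-flow/min-cost-flow call, sweeps the DAG in topological order merging the incoming sets of path identifiers and splitting them by size onto the outgoing arcs, and assigns each vertex one identifier; C\'aceres does the merge/split step with mergeable dictionaries augmented by a size-split operation, for $O((n+m)\log k)$ extra time. The one step to tighten is the data structure. Since the order of identifiers is irrelevant, any balanced sequence tree with subtree sizes (concatenate and split-at-rank in $O(\log k)$ each) suffices, so link-cut trees are unnecessary and your vague ``fallback'' should be dropped.
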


So, let us now look at step (2) of our strategy. The following lemma extends Theorem 2.2 from \cite{abboud_approximation_2016}.
We show that with a slight extension their algorithm can be used on any chain inside a DAG to approximate the maximum distance between any two vertices on the chain.
Note that the shortest path between (even sequential) vertices on the chain might use intermediate vertices not on the chain, thus requiring us to go slightly beyond the original algorithm by \cite{abboud_approximation_2016}.

\begin{algorithm}
    \textbf{Input:} DAG $G$, chain $C = v_1, \dots, v_\ell$, numbered in topological order
    
    \textbf{Output:} 2-approximation for the maximum distance of two vertices in $C$
    
    \begin{enumerate}%[noitemsep, parsep=0pt, topsep=0pt]
        \item Compute a topological order $O$.
        \item Remove all vertices before $v_1$ or after $v_\ell$ in $O$.
        \item Let $V_1, V_2$ be a partition of $V(G)$ such that $V_1 = \lfloor \setsize{V} / 2 \rfloor$ and no vertex from $V_1$ appears after any vertex from $V_2$ in $O$. Let $v_k$ be the last vertex from $C$ in $V_1$.
        \item Run Dijkstra's algorithm forwards and backwards from $v_k$.
        \item Run the algorithm recursively on $G[V_1]$ with chain $v_1, \dots, v_k$ and $G[V_2]$ with chain $v_{k+1}, \dots, v_\ell$.
    \end{enumerate}%

    Report the depth of the deepest vertex from the chain in any shortest path tree seen during the execution. 
    
    \caption{2-approximation of intra-chain diameter.}%
    \label{algo:dag_chain_2_approx}%
\end{algorithm}

\begin{lemma}
    \label{lem:chain-2-approx}
    Given a DAG $G$ and a chain $C = v_1, \dots, v_\ell$, \Cref{algo:dag_chain_2_approx} computes a 2-approximation of the maximum distance of any two sequential vertices on the chain in $\OTilde(m)$ time.
\end{lemma}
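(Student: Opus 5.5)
The plan is to prove the two directions of the approximation separately and then bound the running time. Write $D^\ast \coloneqq \max_{i<j} d_G(v_i,v_j)$ for the maximum distance between two sequential vertices of $C$. All pairs $v_i,v_j$ with $i<j$ are reachable because $C$ is a chain.

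\textbf{Step 1: distances are preserved in subinstances (soundness).} Every instance handled by the recursion is an induced subgraph $G[I]$, where $I$ is a contiguous interval of the topological order $O$.
\begin{itemize}
\item This holds for the initial trimming in step 2.
\item It is preserved by the split into $V_1,V_2$ in step 3, since $V_1$ is a prefix of $O$ and $V_2$ the remaining suffix.
\end{itemize}
In a DAG, every vertex on a $u$–$w$ path lies between $u$ and $w$ in $O$. Hence for $u,w \in I$ every $u$–$w$ path of $G$ stays inside $I$, and $d_{G[I]}(u,w) = d_G(u,w)$. Consequently, every value reported (the depth of a chain vertex $v_j$ in a forward or backward shortest-path tree from a pivot $v_k$) equals $d_G(v_k,v_j)$ or $d_G(v_j,v_k)$. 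Both are finite distances between two vertices of $C$, so the output is at most $D^\ast$.

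\textbf{Step 2: the output is at least $D^\ast/2$.} Fix $i<j$ with $d_G(v_i,v_j)=D^\ast$ and follow the recursion. Consider the last (deepest) call whose chain contains both $v_i$ and $v_j$; such a call exists because the instance sizes strictly shrink.
\begin{itemize}
\item The chain in that call is a contiguous subsequence $v_a,\dots,v_b$, and the pivot $v_k$ splits it into $v_a,\dots,v_k$ and $v_{k+1},\dots,v_b$.
\item Since no child call contains both $v_i$ and $v_j$, we must have $i \le k < j$.
\item If $V_1$ contains no chain vertex, the call keeps the whole chain in one child, so it is not the deepest call; this degenerate case needs to be stated explicitly.
\end{itemize}
Because $C$ is a chain, $v_i \le v_k \le v_j$, so the triangle inequality gives $D^\ast \le d(v_i,v_k) + d(v_k,v_j)$. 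Both terms are computed exactly in this call by the backward and forward Dijkstra from $v_k$, using Step 1. The larger of the two is therefore at least $D^\ast/2$, and the reported maximum is at least this value. The main subtlety is this step: one must make sure the separating pivot really sees both endpoints inside its own instance. That is why Step 1 is proved first.

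\textbf{Step 3: running time.}
\begin{itemize}
\item The topological order is computed once in $\BigO(m)$ time and inherited by the subinstances.
\item Each split halves the number of vertices, so the recursion depth is $\BigO(\log n)$.
\item On each level of the recursion the instances are vertex-disjoint induced subgraphs, so their edge sets are disjoint. The Dijkstra runs on one level therefore cost $\OTilde(m)$ in total.
\end{itemize}
Summing over the levels gives $\OTilde(m)$ overall.
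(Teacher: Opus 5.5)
Your proposal is correct and follows the paper's proof essentially step for step. Both use the same three ingredients: instances are intervals of the topological order, so distances are preserved; the (first/deepest) call whose pivot $v_k$ separates the extremal pair gives the factor $2$ via $d(v_i,v_j)\le d(v_i,v_k)+d(v_k,v_j)$; and the halving recursion gives $\OTilde(m)$ time.

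Your Step 1 is slightly more careful than the paper's soundness remark, since it shows the reported depths are exact $G$-distances rather than merely path lengths. Your degenerate case cannot actually arise: after the trimming step, $V_1$ always contains the first chain vertex of any instance with at least two vertices.
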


\begin{proof}
    For the running time analysis, observe that this divide and conquer algorithm generates two subproblems, each with at most half as many vertices.
    In each recursive call, the additional overhead (a topological sort, a forwards and backwards Dijkstra) runs in \(\BigO(n + m)\).
    Solving this recurrence leaves us with a total running time of \(\BigO(m \log n)\).

    To prove correctness, assume that $a$ and $b$ are two vertices on the chain with maximum pair-wise distance.
    Assume wlog. that $a$ appears before $b$ on the chain (and thus in the topological order).
    In each recursion step, $v_k$ can be before both $a$ and $b$, after both, or between them.
    It the first two cases, $a$ and $b$ are in the same recursive subproblem, so let us focus on the case where $a \le v_k \le b$ for the first time.
    Let $V'$ be the vertex set of this recursive call.
    Since $a,b \in V'$, note that $V'$ must contain the entire subchain from $a$ to $b$.
    Similarly, since the divide-steps respect the topological order, any vertex that is on a path between $a$ and $b$ must be in $V'$.
    Thus $d_G(a,b) = d_{G[V']}(a,b)$.
    By, the triangle inequality, we have $d(a,b) \le d(a,v_k) + d(v_k,b)$.
    By this and the choice of $a$ and $b$, we can conclude that
    \[
    \max(d(a,v_k), d(v_k,b)) \le d(a,b) = \max_{u \le v \in C} d(u,v) \le 2 \max(d(a,v_k), d(v_k,b)).
    \]
    By the definition of the output of the algorithm, this shows the desired property.

    Clearly, any distance that the algorithm outputs corresponds to the length of a path between two vertices on the chain. Therefore, the algorithm can never overestimate the diameter. 
\end{proof}

Next, we show how to compute an estimate for $\tdOp$ with the starting vertex on the chain.
This is step~(3) of our plan.

\begin{lemma}
    \label{lem:chain-bfs}
    There is an algorithm that given a directed graph $G$ and a chain $C = v_1, \dots, v_\ell$, computes an estimate $\hat D$ such that $\hat D \le \max_{a \in C, b \in V(G)} d(a,b) \le \hat D + \max_{a,b \in C} d(a,b)$ in $\Oish(m)$ time.
\end{lemma}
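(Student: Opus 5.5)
The plan is to run a single Dijkstra from a super-source that is attached to the \emph{end} of the chain. Every vertex reachable from some chain vertex is reachable from $v_\ell$'s direction backwards along the chain, and the chain ordering lets us compare starting points cheaply.

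The key observation is that if $a \in C$ can reach $b$, then $v_1$ can reach $b$ as well, since $v_1 \le a$ on the chain. It is therefore natural to look at the latest chain vertex that reaches $b$. For each $b$, let $a^*(b)$ be the last chain vertex (largest index) that can reach $b$, and set $\hat D \coloneqq \max_b d(a^*(b), b)$. The lower bound $\hat D \le \max_{a\in C,b} d(a,b)$ holds because every term is an actual finite distance from a chain vertex. For the upper bound, take any $a \in C$ and $b$ reachable from $a$. Then $a \le a^*(b)$ on the chain, so
\[
d(a,b) \le d(a,a^*(b)) + d(a^*(b),b) \le \max_{u,v\in C} d(u,v) + \hat D.
\]

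It remains to compute the values $d(a^*(b),b)$ in $\Oish(m)$ time; this computation is the main obstacle. My first attempt would avoid recomputing anything per chain vertex. Instead, I run one Dijkstra from a super-source $s$ with edges $s \to v_i$, with weights chosen so that the source tree lexicographically prefers later chain vertices. Concretely, I would compute for each $b$ the value $\mathrm{last}(b) = \max\{i : v_i \text{ reaches } b\}$ by processing the DAG in reverse topological order, propagating maxima (in a general graph, I would first contract SCCs). This takes $O(m)$ time.

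Next, I would compute $d(v_{\mathrm{last}(b)}, b)$ for all $b$ simultaneously, grouping vertices by $i=\mathrm{last}(b)$. A shortest path from $v_i$ to $b$ only visits vertices $w$ with $\mathrm{last}(w)\ge i$, since $v_i$ reaches $w$. Moreover, it cannot visit a vertex $w$ with $\mathrm{last}(w) > i$: such a $w$ is reached by some $v_j$ with $j>i$, and then $v_j$ would reach $b$ through $w$, contradicting $\mathrm{last}(b)=i$. Hence the shortest path lies entirely inside the class $\{w : \mathrm{last}(w) = i\}$. The classes partition $V$, so running Dijkstra from $v_i$ restricted to its own class costs $\Oish(m)$ in total.

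Finally, I take $\hat D$ as the maximum of these distances. This also covers the caveat that in a general directed graph a chain vertex may reach earlier ones: the inequality $a \le a^*(b)$ is still all we need, and it holds by the definition of a chain in $\tc(G)$.
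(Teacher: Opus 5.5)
Your proposal is correct and is essentially the paper's argument: your classes $\{b : \mathrm{last}(b)=i\}$ are exactly the paper's trees $T_i$, which the paper obtains in one backward sweep of Dijkstra runs from $v_\ell, v_{\ell-1},\dots$ on the not-yet-visited vertices, and both the lower bound (shortest paths stay inside a class) and the upper bound (triangle inequality through the last chain vertex reaching $b$) are argued the same way. One small slip: propagating $\mathrm{last}$ as a maximum over in-neighbours must be done in (forward) topological order, not reverse. The paper's backward sweep makes this separate precomputation unnecessary anyway.
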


\begin{proof}
    We start by computing the shortest path trees $T_i \coloneqq \text{\textsc{Dijkstra}}(v_i, G \setminus \bigcup_{j=i+1}^\ell T_i)$ with $T_\ell = \text{\textsc{Dijkstra}}(v_\ell, G)$. By working backwards, starting from $i = \ell$, this is possible in time $\Oish(m)$. We simply keep track of the current tree and already visited vertices by other trees. So each edge will be visited at most once.
    Then, we report the height of the highest tree as the estimate $\hat D$.

    We prove the two inequalities on $\hat D$ separately and start with $\hat D \le \max_{a \in C, b \in V(G)} d(u,v)$.
    Towards this goal, assume $T_i$ is one of the highest trees, and $b$ is a vertex in its last layer.
    Recall that $v_i$ is the root of $T_i$.
    To show the inequality, we show that the path from $v_i$ to $b$ in $T_i$ is a shortest path.
    Assume by contradiction that it is not.
    Then the other path that is shorter must contain a vertex $u$ that is in a $T_j$ with $j>i$ (because otherwise $u$ would be in $T_i$).
    But then since $u$ can reach $b$ and $v_j$ can reach $u$, we also have that $v_j$ can reach $b$ and therefore $b$ must be in one of the trees $T_j, \dots, T_\ell$, contradicting our assumption.

    Let us now prove that $\max_{a \in C, b \in V(G)}d(a,b) \le \hat D +c$, with $c = \max_{a,b \in C} d(a,b)$. Let $a \in C, b \in V(G)$ be one of the pairs maximizing $d(a,b)$.
    Assume $a = v_i$ and let $T_j$ be the tree such that $b \in T_j$.
    Then $a \rightsquigarrow v_j \rightsquigarrow b$ is a path of length at most $c + \hat D$ and therefore $d(a,b) \le \hat D + c$, as desired.
\end{proof}

Now, we combine the two previous lemmas. We compute estimates for both the diameter on the chain as well as the diameter leaving the chain. These can then be combined to give a 3-approximation for any path with one endpoint on the chain. This is step (4).

\begin{theorem}\label{thm:chain_3_approx}
    There is an algorithm that given a DAG $G$ and a chain $C$ in $G$ computes an estimate $\hat{D}$ with $\hat{D} \le \max_{a \in C, b \in V(G)} d(a,b) \le 3\hat{D}$ in time $\OTilde(m)$.
\end{theorem}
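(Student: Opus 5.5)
Run \Cref{algo:dag_chain_2_approx} on $G$ with chain $C$ to obtain $\hat D_1$, run the algorithm of \Cref{lem:chain-bfs} on $G$ with chain $C$ to obtain $\hat D_2$, and output $\hat D \coloneqq \max(\hat D_1, \hat D_2)$. Both subroutines take $\OTilde(m)$ time, so the running time bound is immediate.

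Write $D^\ast \coloneqq \max_{a \in C, b \in V(G)} d(a,b)$ (over reachable pairs) and $c \coloneqq \max_{a,b \in C} d(a,b)$. For the lower bound $\hat D \le D^\ast$, each of $\hat D_1,\hat D_2$ is the length of a shortest path starting at a chain vertex. For $\hat D_1$ this holds by the last paragraph of the proof of \Cref{lem:chain-2-approx}. One subtlety: the algorithm reports depths in subgraphs $G[V']$. Distances there could in principle exceed those in $G$, but the divide steps respect the topological order, so every path between two vertices of $V'$ stays inside $V'$ and the distances agree. For $\hat D_2$ the bound is the first inequality of \Cref{lem:chain-bfs}. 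Since $c \le D^\ast$ (chain pairs are included in the maximum), we get $\hat D \le D^\ast$.

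For the upper bound, \Cref{lem:chain-bfs} gives $D^\ast \le \hat D_2 + c$. We then need $c \le 2\hat D_1$. \Cref{lem:chain-2-approx} is phrased for ``sequential'' vertices, but its proof actually establishes the guarantee for the pair $a,b\in C$ maximizing $d(a,b)$, so it bounds $c \le 2\hat D_1$. Combining, $D^\ast \le \hat D_2 + 2\hat D_1 \le 3\hat D$.

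I expect the main obstacle to be confirming exactly what \Cref{lem:chain-2-approx} guarantees, namely that it approximates the maximum over all reachable chain pairs rather than only consecutive ones. If needed I would re-derive it from its proof: at the first recursion level where the split vertex $v_k$ separates $a$ and $b$, the triangle inequality gives $d(a,b) \le d(a,v_k)+d(v_k,b) \le 2\hat D_1$. The rest is bookkeeping.
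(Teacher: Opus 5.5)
Your proposal is correct and follows the paper's proof: output the maximum of the estimates from \Cref{lem:chain-2-approx} and \Cref{lem:chain-bfs}, then combine $D^\ast \le \hat D_2 + c$ with $c \le 2\hat D_1$ to get the factor $3$. Your extra care makes your write-up cleaner than the paper's, which swaps the roles of $h$ and $\hat D$ mid-proof. Specifically, you note that the proof of \Cref{lem:chain-2-approx} already bounds the maximum over all chain pairs, and that subinstance distances agree with $G$ because the topological-order splits are convex. You also note that only chain-vertex depths are reported, so $\hat D_1$ is a chain-to-chain distance.
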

\begin{proof}
    We run the algorithm from \Cref{lem:chain-bfs} to get an estimate $\hat{D}$, and the algorithm from \Cref{lem:chain-2-approx}. From the set of shortest path trees, we take the maximum height $h$ of any tree and return the maximum $\hat{D}^* \coloneqq \max \{h, \hat{D}\}$.

    The runtime is clearly as claimed. By \Cref{lem:chain-2-approx}, $\hat{D}$ is always bounded by some distance on the chain. Furthermore, $h$ is a valid distance, since anything reached in a shortest path tree is not in a shortest path tree from a later vertex on the chain. Hence, the first inequality holds. On the other hand, we have \[\max_{a \in C, b \in V(G)} d(a,b) \le \max_{a,b \in C}d(a,b) + h \le \hat{D}^* + 2\hat{D}^* = 3\hat{D}^*.\qedhere\]
\end{proof}

Computing a minimum chain cover using \Cref{thm:mcc-alg} and then applying \Cref{thm:chain_3_approx} to every chain in it gives the following result.

\thmMccApprox*

\subsection{Treewidth-Based Algorithm}
\label{sec:treewidth}
In this section, we take a structural perspective and explore the relationship between diameter estimation and treewidth.
We also deepen the connection between shortcut sets, hopsets and diameter estimation under this lens.
First, we briefly explain how an algorithm from~\cite{abboud_approximation_2016} can be slightly modified to exactly compute the $\tdOp$ of a directed graph whose underlying undirected graph has bounded treewidth.
We then give a simplified version of that algorithm that yields as 2-approximation and has a polynomial dependence on the treewidth (compared to the exponential dependence in
\cite{abboud_approximation_2016}).
Their approach relies on a complicated data structure that we replace with a simple, explicit construction to obtain a shortcut sets. We then show how this leads to exact hopsets with stretch $2$ of size $\Oish(n \cdot f(\tw))$.

The core fact exploited by all these techniques is that an undirected graph of bounded treewidth has a small bounded separator.
Since this holds recursively, we can split the graph in a balanced way, while handling all paths that touch the bounded-size separator.
Interestingly, this separator of the underlying undirected graph is useful for computing the (reachability) diameter in directed graphs and for TC-spanners, shortcut sets and hopsets.
Formally, we will use the following lemma.

\begin{lemma}[Lemma 7.19 of \cite{cygan2015parameterized}]
\label{lem:balanced-sep}
    Let $T$ be a tree decomposition of an undirected graph $G$ of bag size $k$.
    Then there is a $1/2$-balanced separator of $G$ of size $k$.
    It can be computed from $T$ in linear time.
\end{lemma}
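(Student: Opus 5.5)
The plan is to use the standard centroid-bag argument on the tree decomposition $T$. Root $T$ arbitrarily. For every bag $X \in V(T)$, define $w(X)$ as the number of vertices of $G$ that appear in some bag of the subtree rooted at $X$ but not in $X$'s parent bag. Equivalently, $w(X)$ counts the vertices whose topmost bag, the root of the connected subtree of bags containing the vertex, lies in the subtree of $X$. Every vertex has exactly one topmost bag, so these counts can be computed bottom-up in time linear in the size of $T$. We then walk down from the root: as long as some child $Y$ of the current bag satisfies $w(Y) > |V(G)|/2$, we move to $Y$. Let $X^\ast$ be the bag where this stops, and output $S \coloneqq X^\ast$, which has size at most $k$.

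For correctness, the components of $T - X^\ast$ are the child subtrees $T_Y$ for each child $Y$ of $X^\ast$, plus the part $T_{\mathrm{up}}$ containing the parent. The key claim is that every connected component of $G - S$ has all its vertices inside bags of a single component of $T - X^\ast$. To see this, note that by property (3) a vertex $v \notin X^\ast$ has its bags inside exactly one component of $T - X^\ast$. Also, an edge $\{u,v\}$ with $u,v \notin X^\ast$ lies in a common bag by property (2), so $u$ and $v$ belong to the same component. Connectivity then propagates this along each component of $G - S$.

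It remains to bound how many vertices of $G - S$ are assigned to each part. The vertices assigned to a child subtree $T_Y$ are exactly those counted in $w(Y)$ minus those in $X^\ast$, so there are at most $w(Y) \le |V(G)|/2$ of them by the stopping rule. For the upward part, we use the fact that we descended into $X^\ast$ only because $w(X^\ast) > |V(G)|/2$. If $X^\ast$ is the root, this part is empty. Otherwise, every vertex in $T_{\mathrm{up}}$ and not in $X^\ast$ has its topmost bag outside the subtree of $X^\ast$. Hence there are at most $|V(G)| - w(X^\ast) < |V(G)|/2$ such vertices. Therefore every component of $G-S$ has at most $|V(G)|/2$ vertices, so $S$ is a $1/2$-balanced separator.

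I expect the main obstacle to be the bookkeeping in defining $w$ so that the upward bound is exact. Counting vertices by their topmost bag, rather than by occurrence in the subtree, is what makes the count $|V(G)| - w(X^\ast)$ for the upward part valid. Double counting the vertices of $X^\ast$ itself is harmless, since those vertices are removed.
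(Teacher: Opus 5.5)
Your proposal is correct and follows the same centroid-bag argument as the paper's sketch: root $T$, compute subtree vertex counts by a bottom-up DP, and take a bag whose subtree is heavy but none of whose children's subtrees are. Counting each vertex at its topmost bag is just one way to implement the paper's ``vertices in bags below $X$'' count, and you also give the argument via properties (2) and (3) that each component of $G-S$ lies inside one component of $T - X^\ast$, which the paper's sketch leaves implicit.
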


\begin{proofsketch}
   Pick an arbitrary root $R$ for the tree decomposition.
   Then, for each bag $X$ compute the number of vertices from $G$ in all bags below $X$ using a simple DP.
   Finally, as the separator, pick a bag that (a) has at least $V(G)/2$ vertices in bags below it, but (b) none of its child-bags do.
\end{proofsketch}

\subsubsection{Exact ReachDiam}

Let us start by remarking how to adapt the algorithm in \cite{abboud_approximation_2016} to compute the reachability diameter.
Unfortunately, we cannot state the result to be self-contained here without replicating much of their work, we thus settle for explaining the required modification.

\begin{observation}[Cf. Theorem 3.3 of \cite{abboud_approximation_2016}]
    There is an algorithm that given a directed graph $G$ of treewidth $\mathrm{tw}$, computes $\td{G}$ in time $\OTilde(m \cdot \tw^2 \log^{\tw-1}(n))$.
\end{observation}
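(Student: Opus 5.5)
We follow the divide-and-conquer scheme of Theorem 3.3 of \cite{abboud_approximation_2016}, changing only the quantity maintained and the treatment of unreachable pairs. First compute a tree decomposition of width $O(\tw)$ (or assume one is given, as they do). Then recurse using \Cref{lem:balanced-sep}: find a balanced separator $S$ of size at most $\tw+1$ in the underlying undirected graph. Every pair $(u,v)\in\tc(G)$ falls into one of two cases. Either some shortest $u$--$v$ path meets $S$, or all shortest paths lie inside a single component of $G-S$. Pairs of the second kind are handled recursively on the components. Each component keeps its induced tree decomposition, and $S$ is added to every bag, in the standard way. The recursion depth is $O(\log n)$, so the total size summed over each level is $\OTilde(m)$.

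For pairs of the first kind, $d(u,v)=\min_{s\in S}\bigl(d(u,s)+d(s,v)\bigr)$. We compute $d(\cdot,s)$ and $d(s,\cdot)$ for every $s\in S$ by running a forward and a backward Dijkstra from each separator vertex, which costs $\OTilde(m\cdot\tw)$ per level. Each vertex $u$ then gets a vector $x_u=(d(u,s))_{s\in S}$ and each $v$ a vector $y_v=(d(s,v))_{s\in S}$. The task becomes computing
\[
\max_{u,v}\ \min_{s\in S}\bigl(x_u[s]+y_v[s]\bigr),
\]
where the maximum ranges only over pairs whose minimum is finite. This is exactly the problem the orthogonal-range data structure in \cite{abboud_approximation_2016} solves, up to the $+\infty$ entries. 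They fix the minimizing coordinate $s^*$, breaking ties by index. Fixing $s^*$ gives $\tw-1$ linear dominance constraints of the form $x_u[s^*]-x_u[s]\le y_v[s]-y_v[s^*]$, and a range-tree query over them returns the maximum of $y_v[s^*]$. Hence the running time of $\OTilde(m\,\tw^2\log^{\tw-1} n)$.

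\textbf{Main obstacle.} The only genuine modification concerns infinite distances. In the classical setting every coordinate is finite. Here $x_u[s]=\infty$ whenever $u$ cannot reach $s$, and a pair $(u,v)$ should count only if some $s$ has both coordinates finite. Such pairs are exactly those in $\tc(G)$ that are reachable through $S$.

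We handle this inside the enumeration over $s^*$. We require $x_u[s^*]<\infty$ and $y_v[s^*]<\infty$, so we only insert or query vertices with finite $s^*$-coordinates. For any other coordinate $s$, if $x_u[s]$ or $y_v[s]$ is infinite, the constraint "$s^*$ beats $s$" holds automatically. To make the range query treat it that way, we replace $\infty$ by a large finite sentinel $M$ exceeding $2n$ times the maximum edge weight. Then a pair with $x_u[s]+y_v[s]\ge M$ always satisfies the dominance constraint for $s$. Meanwhile the $s^*$-coordinate stays genuinely finite, so the reported value is a real finite distance and no fake $M$-valued distance is ever returned. Clamping with $M$ keeps all quantities finite, so the range-tree machinery applies unchanged. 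Combining the maxima over all $s^*$ and over all recursion nodes gives $\td{G}$ exactly.
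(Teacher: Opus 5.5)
Your handling of the $+\infty$ entries is correct and is the same idea as the paper's sketch, which only feeds finite values into the range-maximum structure of \cite{abboud_approximation_2016}. Your version is more precise: you restrict to finite $s^*$-coordinates and clamp the other coordinates to a sentinel $M$ larger than twice any finite distance, so the dominance constraint for an unreachable $s$ holds automatically. The gap is in the divide-and-conquer around it, which as you state it does not compute $\td{G}$ exactly. Your query $\max_{u,v}\min_{s\in S}(x_u[s]+y_v[s])$ ranges over \emph{all} pairs with finite minimum. That includes pairs inside one component $C$ of $G-S$ whose shortest path avoids $S$. For those pairs the minimum is the length of a detour through $S$, which can exceed $d(u,v)$ and even $\td{G}$. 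For example, an edge $u\to v$ of weight $1$ together with a path $u\to s\to v$ of weight $5+5$ makes the query report $10$, while $\td{G}=5$. The algorithm cannot tell your two kinds of pairs apart, so the range query must be restricted to pairs \emph{separated} by $S$. As in \cite{abboud_approximation_2016}, group the components into two sides $A,B$ with no edges between them and query $A\times B$ and $B\times A$. Pairs with an endpoint in $S$ come directly from the Dijkstra runs.

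Once you make that restriction, same-side pairs whose shortest path leaves $C$ through $S$ must be handled by the recursion. Recursing on bare components does not preserve their distance. In $G[C]$ that distance may become larger, and the recursion would then report an overestimate. Or it may become infinite, and the pair is lost. The fix, used both in \cite{abboud_approximation_2016} and in \Cref{algo:tw-2approx}, is to recurse on $G[A\cup S]$ and $G[B\cup S]$ augmented with edges $s\to s'$ of weight $d_G(s,s')$ whenever that distance is finite. This preserves all distances and reachability among the retained vertices. Because $S$ is a single bag, it also does not increase the width. By contrast, adding $S$ to every bag at every level, as you propose, would push the width (and hence the separator size and range-tree dimension) to $\Theta(\tw\log n)$ and destroy the $\log^{\tw-1} n$ bound. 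With these two corrections your argument gives the claimed result.
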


\begin{proofsketch}
  As their algorithm computes the eccentricities of each vertex $u$ in the graph as $\max_{v \in V(G)} d(u,v)$, we need to adapt it such that only finite distances are considered in this maximum. 
  This can be achieved by only inserting finite values in to the range searching for maximum data structure they use internally.
  It is easy to verify that this does not affect running time and that the correctness stays essentially the same, except for the eccentricity definition.
\end{proofsketch}

\subsubsection{A Simpler 2-Approximation}
Now let us take closer look at the algorithmic strategy of how to use treewidth by giving the simpler 2-approximation explicitly. Note that the linear dependence on treewidth is only reached by using recent fast max flow subroutines. Using different treewidth approximations from~\cite{dong2024min-cost} a running time of $\OTilde(m \cdot \mathrm{tw}^3)$ would also be achieveable, which is better for polylogarithmic treewidth.

\begin{theorem}\label{thm:treewidth}
    There is an algorithm that given a directed graph $G$ of treewidth $\mathrm{tw}$, computes a 2-approximation of $\td{G}$ in time $\OTilde(m \cdot \mathrm{tw} + m^{1 + o(1)})$. 
\end{theorem}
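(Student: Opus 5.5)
Divide and conquer on balanced separators of the underlying undirected graph. First obtain a tree decomposition of width $\BigO(\tw)$, or at least a procedure returning balanced separators of size $\OTilde(\tw)$ for every induced subgraph. This is where the $m^{1+o(1)}$ term enters: the separators come from an almost-linear time max-flow/min-cut subroutine used to find balanced vertex cuts, and each separator costs $\OTilde(m \cdot \tw)$ flow work or a few flow calls. Since treewidth is monotone under taking subgraphs, every recursive subinstance again has treewidth at most $\tw$. Given a separator $S$ of the current vertex set $V'$, which is $1/2$-balanced by \Cref{lem:balanced-sep}, run Dijkstra forwards and backwards from every $s \in S$ in $G[V']$. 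For each $s$ record the largest finite distance $\max_{v:\, d(s,v)<\infty} d(s,v)$ and the analogous maximum over incoming distances. Then recurse on each connected component of $G[V'] - S$, and output the maximum value recorded anywhere in the recursion.

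Running time: the recursion has depth $\BigO(\log n)$ because every component has at most half the vertices. On each level the subinstances are vertex-disjoint, so the Dijkstra runs on one level cost $\OTilde(m \cdot \tw)$ in total. Finding the separators costs $m^{1+o(1)}$ per level, or $\OTilde(m\cdot\tw)$ if we spend it on computing decompositions. Summing gives the claimed bound.

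Correctness. Every recorded value is a finite distance $d_{G[V']}(s,v)$ in an induced subgraph, hence at least the true distance $d_G(s,v)$. This is the delicate direction, since a subgraph distance can exceed the $G$-distance. I fix it as follows. A shortest $s$-$v$ path in $G$ that leaves $V'$ must cross a separator chosen at an earlier level. All vertices of that earlier separator were removed, so $v$'s component and $s$ are separated from the outside. More precisely, every vertex outside $V'$ that is adjacent to $V'$ belongs to an ancestor separator. I would therefore run each Dijkstra on the whole current region plus the ancestor separators, or, more simply, on all of $G$ while taking the maximum only over targets in $V'$. But running on all of $G$ breaks the running time. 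Instead I would argue directly: any path from $s\in V'$ that leaves $V'$ must pass through an ancestor separator vertex $t$. Dijkstra from $t$ was already run on an ancestor region containing the whole path, so that run covers the pair. I would fold this into an upper-bound-only argument by reporting estimates that are always lengths of actual paths in $G$. This gives a lower bound on some $d_G(x,y)$ only if the path is shortest, and I would handle that by only ever claiming $\hat D \le 2\td{G}$ and $\td{G}\le 2\hat D$ loosely, via the triangle inequality below.

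The main argument for the factor $2$: take a pair $(u,v)\in\tc(G)$ achieving $\td{G}$ and a shortest path $P$ from $u$ to $v$. Consider the first recursion level at which some vertex of $P$ lies in the chosen separator $S$. Before that level, $P$ is connected and avoids every separator, so all of $P$ stays in one component. Hence $P\subseteq V'$ and some $s\in S\cap P$ exists. Then $d_{G[V']}(u,s)$ and $d_{G[V']}(s,v)$ equal the $G$-distances, both are finite, and they sum to $\td{G}$. So one of them is at least $\td{G}/2$, and the backward or forward Dijkstra from $s$ records a value at least $\td{G}/2$.

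For the other direction, each recorded value $d_{G[V']}(s,x)$ with $x$ reachable from $s$ must be shown to be at most $2\td{G}$. I expect this to be the main obstacle, because induced-subgraph distances can blow up arbitrarily. The step I would try first is to avoid this issue by running the Dijkstras from $S$ in the graph $G[V'\cup \mathcal{A}]$, where $\mathcal{A}$ is the union of ancestor separators. This costs $\OTilde(\tw\log n)$ extra vertices per region but potentially many extra edges. If that is too expensive, I would add the shortcut edges between ancestor separator vertices (exact distances, $\OTilde(\tw^2)$ edges) to each subinstance. This keeps the distances inside $V'$ equal to the true $G$-distances, so every recorded value is at most $\td{G}$, which yields $\td{G}/2\le\hat D\le\td{G}$.
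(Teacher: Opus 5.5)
You have a genuine gap. Your lower-bound argument is fine: take the first level at which the shortest path $P$ meets a separator; then $P\subseteq V'$ and both halves are exact in $G[V']$. But the algorithm you actually specify, Dijkstra inside $G[V']$ and recursion on the components of $G[V']-S$, violates $\hat D\le\td{G}$, and your correctness paragraph does not repair it.

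Take a directed path $x_0\to\dots\to x_k$ plus an apex $t$ with $x_i\to t$ and $t\to x_i$ for all $i$. The treewidth is $2$ and $\td{G}=2$. Every small balanced separator must contain $t$, so at the next level some component is a bare subpath with $k/\mathrm{polylog}(n)$ edges. By your own lower-bound argument, the recursion then reports a distance of that order. In particular, reporting lengths of actual paths and bounding them ``loosely via the triangle inequality'' does not work, since lengths of non-shortest paths are not bounded by any function of $\td{G}$. Your remark that some ancestor run from $t$ sees the true distance only helps the lower bound; it does not stop the current run from overshooting.

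Your last paragraph points at the right device, which is exactly the paper's, but the two options you list do not work separately. Running Dijkstra in $G[V'\cup\mathcal A]$ without extra edges is not just expensive but wrong: a shortest path can leave $V'$ through $a\in\mathcal A$, pass through a sibling component, and re-enter through $a'\in\mathcal A$. Conversely, shortcut edges among ancestor separator vertices are meaningless in $G[V']$, where those vertices are deleted.

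The paper combines both. It recurses on $G[C\cup S]$, keeping the separator in each child, and adds $s\to s'$ with weight $d(s,s')$ for $s,s'\in S$. By induction, every subinstance then preserves all distances among its vertices. A shortest path leaving $C\cup S$ exits and re-enters through $S$, and that excursion is replaced by a single shortcut whose weight is exact because it was computed in a distance-preserving instance. This gives $\hat D\le\td{G}$ as well as the exactness your lower bound uses.

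Your running-time argument must then be redone. Subinstances now overlap in $S$ and are no longer subgraphs of $G$, so monotonicity of treewidth under subgraphs does not apply. The paper instead uses that $S$ is a single bag of the decomposition $T$, so adding a clique on $S$ keeps $T$ valid. Shortcuts among all of $\mathcal A$, which spans many bags, would destroy this unless you keep reading separators off the original $G[V']$.
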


\begin{algorithm}
    \textbf{Input:} Directed graph $G= (V,E)$

    \textbf{Output:} 2-approximation of $\tdOp$
    
    \begin{enumerate}%[noitemsep, parsep=0pt, topsep=0pt]
        \item Compute a tree decomposition $T$ of $G$ with bag size at most $\tw \cdot \log^3(n)$\;
        \item Pick one of the bags $S$ that is a balanced separator.\label{step:balanced-separator}
        \item Run a forwards and backwards Dijkstra from every $s \in S$ \label{step:dijkstra}.
        \item For every $s, s' \in S$, add an edge $s \to s'$ of weight $d(s,s')$ if that distance is finite.
        \item For every component $C$ of the undirected graph $G - S$, recurse on the directed graph $G[C \cup S]$.
    \end{enumerate}%
    
    Return the maximum distance seen in any Dijkstra (also from the recursive instances).
    \caption{Approximation algorithm for the $\tdOp$ of a graph of bounded treewidth.}%
    \label{algo:tw-2approx}%
\end{algorithm}

\begin{proof}
   We consider \Cref{algo:tw-2approx}.

    For the running time, we can compute a tree decomposition once in the beginning using recent fast tree width approximations in time $m^{1 + o(1)}$~\cite{Bernstein2022}. The algorithm to find a balanced separator (\Cref{lem:balanced-sep}) runs in linear time and guarantees that the separator contains only vertices from a single bag in the tree decomposition. Since that separator will correspond to a set of vertices in the same bag, adding edges between them does not increase the bag size of our tree decomposition. 
    
    To analyze the recursion, first notice that there are $\BigO(\log n)$ levels until the size of the graph depends only on $\mathrm{tw}\cdot \log n$. At that point, we can compute the diameter in cubic time. On every level of the recursion, we touch each original edge at most $\OTilde(\mathrm{tw})$ times, once from each Dijkstra. The newly added edges are at most $\mathrm{tw}^2\log n$ in every subinstance and thus do not further increase the running time. In total, we arrive at the claimed running time of $\OTilde(m \cdot \mathrm{tw})$. 

    Regarding the correctness, first notice that distances let $d(a,b) = \td{G}$ and consider the recursive step where either at least one of $a$ and $b$ is in $S$, or $a$ and $b$ are on different sides of $S$. In the first case $\hat{D} = d(a,b)$ in this step. In the second case, there is a vertex $s \in S$ that lies on the shortest path from $a$ to $b$. Hence, in that case either $d(a,s)$ or $d(s,b)$ is a 2-approximation.

    If $a$ and $b$ lie in the same component of $G-S$, we have to argue that $d_{G_C}(a,b) = d_G(a,b)$. If the previously shortest path was completely contained in $C$, it remains the shortest path, since the additional edges correspond to previous distances and thus cannot add shorter paths. If the previously shortest path left the component $C$, then it must first lie in $C$, then use a vertex $s_1 \in S$, and at the end use another vertex $s_2 \in S$ and afterwards stay in $C$. The subinstance $G_C$ contains a direct edge from $s_1$ to $s_2$ with weight $d_{G}(s_1, s_2)$, hence a path with the same distance still exists.
\end{proof}

\subsubsection{Adaptation to Shortcut Sets and Hopsets}
Let us now see how the previous algorithm can be adapted to yield shortcut sets and hopsets. Explicitly, we add the shortcut edges (or weighted hopset edges corresponding to distance of the endpoints) from and to the balanced separator in Step \ref{step:dijkstra}.

We focus our presentation on hopsets as the statement for hopsets directly implies the shortcut set version and as the shortcut set version, while it hasn't been explicitly stated, is implied by the result on TC-spanners in $H$-minor-free graphs in \cite{DBLP:journals/siamcomp/BhattacharyyaGJRW12}.

Recall that in a (non-negatively weighted) directed graph $G$, a $(\beta, \smol)$-hopset is a set $H \subset V(G) \times V(G) \times \R$ of new (weighted) edges such that (1) no distance decreases from $G$ to $G \cup H$ (including, becoming finite) and (2) in $G \cup H$ for all $u,v$ with $d(u,v) < \infty$, there is a path in $G \cup H$ of $\beta$ vertices of length at most $(1+\smol) d(u,v)$.
If $\smol=0$, we have an exact $\beta$-hopset.

We note that the special case of $\beta$-hopsets for $\beta = 2$ were also studied in the context of \emph{hub labelings}. In that context, similar techniques as in \Cref{thm:tw-hopset} were used (see~\cite{ducoffe2022eccentricity,gavoille2004distance}). 

\thmTwHopset

\begin{proof}
  During the execution of the algorithm from \Cref{thm:treewidth}, for each recursive instance on $V'$ with separator $S$ do the following.
  For each $s \in S$, add edges from $s$ to all vertices $s$ can reach with the distance as the weight to the hopset $H$.
  Similarly, add edges to $s$ from all vertices that can reach $s$ with the distance as the weight to $H$.
  For the running time, it is easy to see that this information is revealed from the Dijkstra calls performed in the algorithm.
  For the hopset size, use the same recurrence, with cost $\setsize{S} \times \setsize{V'}$ for each recursive instance.
  For the hopbound, let $u,v$ be vertices with distance $d(u,v) < \infty$. 
  At the beginning $u,v$ are in the same subinstance (since there is only one).
  Then, in each recursive step, one of the following must happen.
  In case 1, at least one of $u,v$ ends up in the separator, then $H$ will include direct edge of weight $d(u,v)$.
  In case 2, $u,v$ end up in the same subinstance, then recall from the proof of \Cref{thm:treewidth} that the distance between them does not change in this subinstance.
  In case 3, $u,v$ are separated by the separator $S$.
  Then there is a $s \in S$ that lies on a shortest $u,v$ path, thus edges $u \to s$ and $s \to v$ will be added such that $d(u,v) = d(u,s) + d(s,v)$.
  As at some point of the recursion, case 1 or 3 must occur, there is a shortest path with 2 hops between $u$ and $v$ in $G+H$, showing that $H$ is an exact 2-hopset.
\end{proof}

\section{Unweighted Graphs}
\label{sec:unweighted}

In this section, we prove our results about unweighted graphs. First, we rule out small constant-factor approximations in \Cref{sec:unweighted-lb}. Next, we show how to use our algorithm for weighted DAGs with small width to derandomize a classical randomized sampling algorithm in \Cref{sec:sampling}. Finally, we provide our best approximation guarantee in near-linear time that achieves an $n^{1/2 + o(1)}$-approximation in \Cref{sec:fineman}, showing that we can go beyond a natural trade-off between approximation guarantee and running time suggested by the classical sampling algorithm.

This further deepens the connection between $\tdOp$ and shortcut sets and hopsets, as in the previous subsections we have seen a number of algorithms for approximating $\tdOp$ inspired by shortcut set algorithms and we now see the other direction: a diameter estimation algorithm yielding a shortcut set algorithm.
Similar ideas also underlie the construction of almost-linear size TC-spanners for $H$-minor-free graphs in \cite{DBLP:journals/siamcomp/BhattacharyyaGJRW12} (their result implies a comparable result for shortcut sets but not for hopsets).

\subsection{Lower Bounds}\label{sec:unweighted-lb}

\begin{figure}[tp]
    \centering
    \begin{minipage}[t]{.48\linewidth}
        \centering
        \includegraphics*[page=1]{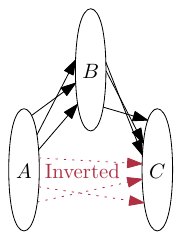}
        \subcaption{Lower bound construction from \Cref{thm:lb:bmm-1-vs-2}. This graph has \(\tdOp\) 2 iff. there is a triangle in the original graph.}
        \label{fig:bmm-lb-construction}
    \end{minipage}
    \hfill
    \begin{minipage}[t]{.46\linewidth}
        \centering
        \includegraphics*[page=2]{diameter-drawings.pdf}
        \subcaption{Lower bound construction from \Cref{thm:lb:ov}. This graph has \(\tdOp\) 3 iff. there are orthogonal vectors in $a \in A,b \in B$.}
        \label{fig:ov-lb-construction}
    \end{minipage}
    \caption{Our lower bound constructions for unweighted graphs.}
\end{figure}

First, we give a strong lower bound under the combinatorial BMM conjecture.

\begin{theorem}
    Unless combinatorial BMM fails, there is no combinatorial algorithm that differentiates between \(\tdOp\) 1 and 2 in DAGs running in time \(\BigO(n^{3 - \smol})\). 
    \label{thm:lb:bmm-1-vs-2}
\end{theorem}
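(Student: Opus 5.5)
We reduce from triangle detection in tripartite graphs. By~\cite{bmm_triangle_detection}, a truly subcubic combinatorial triangle-detection algorithm would give a truly subcubic combinatorial BMM algorithm. Let $G$ be tripartite with parts $A,B,C$, each of size $n$. We build an unweighted DAG $H$ on $\BigO(n)$ vertices such that $\td{H}=1$ if $G$ has no triangle and $\td{H}=2$ otherwise. This matches the caption of \Cref{fig:bmm-lb-construction}, which says the diameter is $2$ exactly when a triangle exists.

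The construction takes copies $A,B,C$ as three layers. It adds $a\to b$ for every edge $\{a,b\}\in E(G)$ with $a\in A$, $b\in B$. It adds $b\to c$ for every edge $\{b,c\}\in E(G)$ with $b\in B$, $c\in C$. Finally, it adds the ``shortcut'' edge $a\to c$ for every \emph{non-edge} $\{a,c\}\notin E(G)$ with $a\in A$, $c\in C$. The graph is layered, hence a DAG, and every path has length at most $2$.

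Correctness proceeds as follows.
\begin{itemize}
\item Pairs inside $A\times B$ or $B\times C$ are either unreachable or at distance $1$.
\item A pair $(a,c)$ is reachable iff some $b$ satisfies $a\to b\to c$, or the non-edge edge $a\to c$ exists.
\item If no triangle exists, take any reachable $(a,c)$. Either it is joined by a direct edge (distance $1$), or it is reachable only through some $b$ with $\{a,b\},\{b,c\}\in E(G)$. In the second case, the absence of a triangle forces $\{a,c\}\notin E(G)$, so the direct edge $a\to c$ exists anyway and the distance is $1$. Hence $\td{H}=1$.
\item If a triangle $a,b,c$ exists, then $c$ is reachable from $a$ via $b$. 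Since $\{a,c\}\in E(G)$, no direct edge $a\to c$ was added, so $d(a,c)=2$ and $\td{H}=2$.
\end{itemize}

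The construction takes $\BigO(n^2)$ time, and $H$ has $3n$ vertices. A combinatorial $\BigO(n^{3-\smol})$ algorithm distinguishing $\td{H}=1$ from $\td{H}=2$ would therefore detect triangles in truly subcubic time, contradicting \Cref{hyp:bmm} via~\cite{bmm_triangle_detection}.

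The only delicate point is the complement trick for $A$–$C$ pairs. Adding the direct edge exactly on non-edges makes every reachable pair adjacent unless a triangle certifies a length-$2$ pair without a shortcut. No other step is hard.
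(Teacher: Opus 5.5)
Your proposal is correct and is essentially the paper's proof: you use the same tripartite construction (edges $A\to B$ and $B\to C$ kept, the $A$--$C$ non-edges added as direct edges $A\to C$) and the same two-directional argument, concluding via \cite{bmm_triangle_detection}. Your write-up is slightly more careful than the paper's, for instance in explicitly checking the $A\times B$ and $B\times C$ pairs; the only nit, which the paper shares, is that if $H$ has no edges at all then $\td{H}$ is not $1$, so the dichotomy is really $\td{H}\le 1$ versus $\td{H}=2$.
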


\begin{proof}
    We reduce from triangle detection in a tripartite graph $G$ with partitions $A,B,C$. Create $G'$, by directing the edges from $A$ to $B$, from $B$ to $C$.
    Take the complement of the edges between $A$ and $C$ and direct them from $A$ to $C$.
    See \Cref{fig:bmm-lb-construction} for an illustration.

    If there is a triangle $a,b,c$ in $G$, there is a path $a \to b \to c$ in $G'$ but no $a \to c$ edge.
    Hence, the diameter of $G'$ must be $2$, since this is a shortest path of length 2 and by construction, there are no longer paths.
    If the diameter of $G'$ is 2, there must be a pair $a,c$ with a path $a\to b \to c$ but no $a \to c$ edge. This triple corresponds to a triangle in $G$.

    Thus, a combinatorial algorithm in time $\BigO(n^{3-\smol})$ translates into an algorithm for triangle detection in the same runtime, which is impossible under the combinatorial BMM hypothesis.
\end{proof}

This immediately gives the following result.

\corLbBmm*

Using OV, we show a different tradeoff between time and approximation factor.
This result is not only more meaningful for sparse graphs, it is also based on a different, arguably more trusted, hypothesis. 

\thmLbOv*

\begin{proof}
    We start with the standard embedding of an OV instance as a graph. Let $A, B \subseteq \{0,1\}^d$ with $\setsize{A}, \setsize{B} \in \Theta(n)$.
    \begin{itemize}
        \item  Create $V \coloneqq A \sqcup B \sqcup [d]$.
        \item For $a \in A$ and $i \in[d]$, create an edge $a \to i$ iff. $a_i = 1$.
        \item Similarly, for $b \in B$, create an edge $i \to b$ iff. $b_i = 1$.
        \item To this graph, add two vertices $x$ and $y$.
        \item Add edges from all vertices in $A$ to $x$, from $x$ to $y$, and from $y$ to all vertices in $B$.
    \end{itemize}
    
    See \Cref{fig:ov-lb-construction} for an illustration.
    
    Note that between any two $a,b$ there is a path of length 2 via $[d]$ if and only if $a$ and $b$ are \emph{not} orthogonal.
    Additionally, there is a path of length 3 from every vertex in $A$ to every vertex in $B$ via $x$ and $y$.
    It is easy to verify that this graph has a $\tdOp$ of 3 if there is an orthogonal pair in $A \times B$ and a $\tdOp$ of 2 otherwise.

    Notice that the graph contains $\BigO(n)$ vertices and $\OTilde(n)$ edges.
    Therefore, under the orthogonal vectors hypothesis, there can be no \(\BigO(m^{2-\smol})\) time algorithm that distinguishes between \(\tdOp\) of 2 and 3, where $m$ is the number of edges in the graph. 
\end{proof}

\subsection{Sampling-Based Algorithm and Its Derandomization}\label{sec:sampling}

A simple folklore technique to approximate the diameter is based on sampling. After sampling a set $S$ of size $\OTilde(n / s)$ for a parameter $s$, with high probability, any shortest path with at least $s$ vertices intersects $S$.

In this section, we first state the whole algorithm for completeness.
Then, we go on to show how to replace the randomization by a chain cover that deterministically guarantees to touch every long shortest path in the graph. Then, we can use our chain-based approximation to efficiently compute $\tdOp$ for every path with one endpoint on a chain, which gives a similar approximation guarantee to sampling (with an additional factor 3 due to \Cref{thm:chain_3_approx}).

\begin{algorithm}
    \textbf{Input:} Directed graph $G= (V,E)$ with $n$ vertices and parameter $s \in [n]$

    \textbf{Output:} $\tdOp$ estimate $\hat D$ such that $D - s \le \hat D \le D$, where $D$ is the diameter of $G$, whp
    
    \begin{enumerate}%[noitemsep, parsep=0pt, topsep=0pt]
        \item Sample $\OTilde(n / s)$  vertices $V'$ uniformly at random.
        \item Run a forwards BFS from the vertices in $V'$.
    \end{enumerate}%
    
        Return the height of the tallest BFS tree as $\hat D$.
    \caption{Randomized algorithm for computing $\tdOp$ in $\BigO(n/s \cdot m)$ time with $s$ additive error.}%
    \label{algo:random-sampling}%
\end{algorithm}

\thmRandomSampling*

\begin{proof}
  We consider \Cref{algo:random-sampling}.
  The algorithm takes time $\BigO(nm/k)$ as it requires $2n/k$ BFS runs.

    For the correctness proof, observe that as we always report the height of a BFS tree in $G$, we always have $\hat D \le D$.
    For the other inequality, first observe that if $D \le k$, the bound is trivial.
    Thus assume that there is a shortest path $P \colon a \rightsquigarrow b$ in $G$ with $|P| > k$. Then, with high probability, there is a vertex $c \in V'$ in the first $k$ vertices of $P$. 
    Therefore, one of the BFS trees from $c$ has height at least $D-k$.
\end{proof}

Now we show how to derandomize this simple sampling algorithm for DAGs. The idea is to substitute the sampling with a chain cover and then use the chain-based algorithm from \Cref{thm:chain_3_approx}.

An $\ell$-chain cover is similar to the minimum chain cover, but the parameter $\ell$ may be chosen arbitrarily to control the number of chains in the cover.
As a trade-off, we then might not be able to cover all vertices, but instead get a weaker guarantee that no long chains of uncovered vertices remain.
This will be enough for our use case here.

\begin{definition}[$\ell$-Chain Cover]
\label{def:chain-cover}
    An \emph{$\ell$-chain cover} is a set of at most $\ell$ vertex-disjoint chains $\mathcal C$ in $G$ such that for any path $P \subseteq G$, it holds that $\setsize{V(P) \setminus V(\mathcal C)} \le 2 n/ \ell$.
\end{definition}
Such a $\ell$-chain cover always exists and can be computed in near-linear time using MCMF algorithms~\cite{caceres_minimum_2023}.

\begin{lemma}[\cite{caceres_minimum_2023}]\label{lem:compute_l_cover}
    There is an algorithm that computes an $\ell$-chain cover of a directed graph $G$ in time $m^{1 + o(1)}$, for any $\ell$.
\end{lemma}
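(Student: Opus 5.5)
The plan is to compute the cover as an integral min-cost flow. The point is that ``$\ell$ chains covering as many vertices as possible'' is exactly a flow problem, and an optimal solution automatically satisfies the path condition of \Cref{def:chain-cover} by an exchange argument. I assume $G$ is a DAG, which is the setting where \Cref{lem:compute_l_cover} is used in \Cref{thm:sampling_derandomized}. For general digraphs I would first pass to the condensation and give each SCC node a weight equal to its size; I do not work out that reduction here.

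\textbf{Network.} Split every $v$ into $v_{\mathrm{in}}$ and $v_{\mathrm{out}}$ and join them by two parallel arcs.
\begin{itemize}
\item A \emph{covering} arc of capacity $1$ and cost $-1$.
\item A \emph{pass-through} arc of capacity $\ell$ and cost $0$. This lets chains pass through a vertex without covering it, so we never need transitive-closure edges.
\end{itemize}
Every edge $u\to v$ of $G$ becomes an arc $u_{\mathrm{out}}\to v_{\mathrm{in}}$ of capacity $\ell$ and cost $0$. Add a source $s$ with arcs to every $v_{\mathrm{in}}$ and a sink $t$ with arcs from every $v_{\mathrm{out}}$. Ask for a min-cost flow of value (at most) $\ell$.

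The network is acyclic, so negative costs create no negative cycles. It has $\BigO(n+m)$ arcs, and an almost-linear-time min-cost flow algorithm returns an integral optimum in $m^{1+o(1)}$ time. Integrality lets us view the flow as $\ell$ $s$--$t$ paths. Each such path is a path of $G$, and its covering arcs mark a set of vertices in topological order, which is a chain. Covering arcs have capacity $1$, so the chains are vertex-disjoint. The optimum maximizes the total number of covered vertices over all families of $\ell$ paths of $G$, each allowed to cover a subset of its vertices.

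\textbf{Exchange argument.} Let $P$ be any path of $G$ and suppose $\setsize{V(P)\setminus V(\mathcal C)} > 2n/\ell$. The chains are disjoint, so they cover at most $n$ vertices in total. Hence some flow path $Q$ covers at most $n/\ell$ vertices. Replace $Q$ by $P$ routed as a flow path that covers all vertices of $P$ not covered by the other $\ell-1$ chains. This keeps capacities feasible: the pass-through arcs have capacity $\ell$, and the covering arcs used by $P$ were free. We lose at most $n/\ell$ covered vertices and gain more than $2n/\ell$, so the cost strictly decreases. This contradicts optimality. Factor $1$ would already suffice; the factor $2$ gives slack if one only has an approximately optimal flow.

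\textbf{Main obstacle: extracting the chains in near-linear time.} A naive decomposition into $\ell$ paths can have total length $\Theta(\ell n)$, which is too slow. I only need, for each covered vertex, the next covered vertex on its flow path. I would get this with the standard $\BigO(m\log n)$ flow decomposition on DAGs using dynamic trees (link-cut trees). Alternatively, I would process vertices in topological order and greedily pair units of incoming flow with outgoing flow, splicing chains together only at covering arcs. In both cases the total work is $\OTilde(m)$, which is dominated by the $m^{1+o(1)}$ flow computation.
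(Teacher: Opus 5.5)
Your proposal is correct and follows the min-cost-flow route that the paper only gestures at. The paper gives no argument of its own beyond citing \cite{caceres_minimum_2023} and remarking that MCMF algorithms suffice. Your network and exchange argument therefore make explicit what the citation leaves implicit, and the exchange argument even gives $n/\ell$ instead of $2n/\ell$.

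The step you state too quickly is the extraction of the chains, which is exactly where the technical effort of \cite{caceres_minimum_2023} goes (mergeable dictionaries with a size-split operation). Your greedy-pairing variant is near-linear only if the labels (``last covered vertex'') carried by the flow units on each arc are kept in a structure supporting concatenation and split-by-size in logarithmic time. Since the labels are unordered, balanced-tree sequences suffice. Tracked unit by unit, however, the labelled units alone can again cost $\Theta(\ell n)$.

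Your link-cut-tree variant does work, and the reason should be stated:
\begin{itemize}
\item each peeled $s$--$t$ path removes at least one unit of flow, so there are at most $\ell$ peels;
\item every arc is linked and cut at most once, because each node's pointer into its out-arc list only advances;
\item every used covering arc carries exactly one unit, so it lies on exactly one peeled path and can be reported in path order while that path's zero-valued arcs are cut from $s$'s side.
\end{itemize}

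The condensation reduction you skipped goes through as you suggest. Give the covering arc of an SCC node $X$ cost $-\setsize{X}$. A simple path $P$ visits each SCC in one contiguous block, and its uncovered vertices lie in uncovered SCCs on $P$, whose total weight is at least $\setsize{V(P)\setminus V(\mathcal C)}$. Each covered SCC can be listed in arbitrary order inside its chain.
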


 The core observation is now that any path either touches a chain in the cover or has length at most $2n/\ell$. Thus, if we perform our algorithm that approximates the diameter of all paths touching a chain from the previous section, we will not lose too much compared to the randomized sampling algorithm.

\thmSamplingDerandomized*

\begin{proof}
    Our algorithm first computes a $2n / \ell$-chain cover. Then, for each chain $C$ in the chain cover, we run the algorithm from \Cref{thm:chain_3_approx} on $C$. Finally, we return the maximum answer $\hat{D}$ returned for any chain.

    The runtime follows immediately from \Cref{lem:compute_l_cover} and \Cref{thm:chain_3_approx}.
    
    For the correctness, let $a,b \in V$ be a pair of vertices with $d(a,b) = \td{G}$. If $d(a,b) \le \ell$, the bound holds trivially. Otherwise, a shortest path from $a$ to $b$ must intersect at least one chain from the chain cover. Let $C$ be a first such chain on a shortest path from $a$ to $b$ and let $v$ be the vertex where they intersect first.
    Then, by \Cref{thm:chain_3_approx} we get \[3\hat{D} \ge d(v,b) \ge d(a,b) - \ell.\] Furthermore, we only output valid distances, so the guarantees are as claimed.
\end{proof}

\subsection{Pivot-based Sampling}\label{sec:fineman}
Next, we present a near-linear-time approximation algorithm for $\tdOp$.
It provides a $n^{1/2+o(1)}$ multiplicative approximation factor.
Note that this is better than the $\Theta(k)$ to $\Theta(mn/k)$ trade-off shown by the (additive) approximation algorithms from the previous subsection. 
To achieve this, we can use a pivot-based approach pioneered by Jeremy Fineman in 2017 \cite{fineman_nearly_2017} and improved by Arun Jambulapati, Yang P. Jambulapati and Aaron Bernstein in 2019 \cite{DBLP:conf/focs/LiuJS19} designed for computing shortcut sets, modifying it to approximate $\tdOp$.
While their overall goal is to compute a shortcut set in parallel, they give a sequential routine to compute a shortcut set in $\Oish(m)$ time.
They are interested in computing a directed shortcut set with near-linear work and low depth.
We see that their shortcut set quality parameter will map to our approximation guarantee.
That is, the following theorem holds about \Cref{algo:ljs}.

\thmLjs*

The proof of this theorem can be seen as adaptation of Theorem 3 of \cite{DBLP:conf/focs/LiuJS19}.
Our stretch analysis maps to their hopbound analysis, but we need to whitebox it and show how each recursive call will affect the stretch.

\begin{algorithm}
    \textbf{Input:} Directed graph $G= (V,E)$, recursion level $r$

    \textbf{Output:} $\tdOp$ estimate $\hat D$, where $D$ is the diameter of $G$, whp
    
    \begin{enumerate}%[noitemsep, parsep=0pt, topsep=0pt]
        \item Base case: If the graph has a single vertex, return 0
        \item Sample vertices $S$ from $V$ with probability $p \in \Oish((\log n)^{r+1}/n)$ uniformly at random. Here $n$ is the original number of vertices at recursion level $0$.
      \item Run forwards and backwards BFS from the vertices in $S$.
      \item Each time, remove all vertices that are found in both the forward and backwards BFS from the same vertex in $S$.
      \item Partition the remaining vertices into sets $V_1, \dots$ such that all the vertices in each $V_i$ are reachable forwards and reachable backwards from the exact same vertices from $S$.
        \item Recursively call this algorithm on the graphs $G[V_1], \dots$, increasing $r$ by one.
    \end{enumerate}%
    
    Report the largest distance seen at any recursion level as the diameter estimate $\hat D$.
    \caption{Algorithm of \cite{DBLP:conf/focs/LiuJS19} adapted for $\tdOp$}%
    \label{algo:ljs}%
\end{algorithm}

For simplicity of presentation, we have left out how the shortcuts are added as it does not improve recursive behavior or the sequential running time.

\paragraph{Running Time.} By comparing to the original algorithm, it is easy to verify that we do not perform extra work (except tracking the deepest BFS performed, which is straightforward in the presented recursion).

\paragraph{Correctness.}
Verifying that this algorithm correctly approximates the diameter requires a closer look.
As part of their shortcut set quality analysis, \cite{DBLP:conf/focs/LiuJS19} establish the following lemma, which we will rely on to establish out approximation guarantee.

\begin{lemma}[Inductive application of Lemma 4.4 in \cite{DBLP:conf/focs/LiuJS19}]
  \label{lem:ljs-core-lemma}
  Let $P$ be a path in $G$. Then, whp, we can partition $P$ into $n^{1/2+o(1)}$ subpaths $P_1, \dots, P_\ell$ such that for each $P_i$ there is a recursive application of \Cref{algo:ljs} where there is a vertex $u$ in its set $S$ such that $(P_i)_1$ can reach $u$ and $u$ can reach $(P_i)_{\setsize{P_i}}$.
\end{lemma}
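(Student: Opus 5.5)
The plan is to prove the lemma by induction on the recursion level, taking Lemma 4.4 of \cite{DBLP:conf/focs/LiuJS19} as the one-level statement and unrolling it over the recursion tree of \Cref{algo:ljs}. The invariant is: at level $r$, $P$ is split into (a) \emph{finished} subpaths, each already certified by a pivot $u$ from some earlier call (first vertex reaches $u$, $u$ reaches last vertex), and (b) \emph{active} subpaths, each lying entirely inside the vertex set $V_i$ of a single level-$r$ subinstance. Initially the only active subpath is $P$ itself, in the root call. Each finished subpath is one of the $P_i$ in the statement.

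The one-level step is a monotonicity observation. Take an active subpath $Q = x_1, \dots, x_t$ in an instance $G[V']$ with pivot set $S$. For each $s \in S$, let $R^-(s)$ be the positions of $Q$ that reach $s$, and $R^+(s)$ the positions that $s$ reaches.
\begin{itemize}
\item $R^-(s)$ is a prefix of $Q$ and $R^+(s)$ is a suffix, since reachability is transitive along $Q$.
\item If some $s$ has $\max R^-(s) \ge \min R^+(s)$, then every vertex in $[\min R^+(s), \max R^-(s)]$ is found by both BFS's from $s$ and is deleted. The subpath from $x_{\min R^+(s)}$ to $x_{\max R^-(s)}$ is certified by $s$, because $x_{\min R^+(s)}$ is reached by $s$ and $x_{\max R^-(s)}$ reaches $s$. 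I declare it finished.
\item Off the deleted vertices, the signature $(\{s : x \text{ reaches } s\}, \{s : s \text{ reaches } x\})$ changes monotonically along $Q$: the first set shrinks and the second grows.
\end{itemize}
Hence each partition class of step 5 meets $Q$ in a contiguous run. Consecutive vertices with the same signature lie in the same induced subgraph $G[V_i]$, so the edges of $Q$ between them survive. These runs become the new active subpaths.

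The number of new pieces is bounded by the number of pivots $s$ that are \emph{related} to $Q$ (some vertex of $Q$ reaches $s$ or is reached by $s$) but do not certify any part of it. This is exactly the quantity Lemma 4.4 of \cite{DBLP:conf/focs/LiuJS19} controls. With sampling probability $p \approx (\log n)^{r+1}/n$ and the shrinking of related sets across levels, it bounds whp the expected number of such splitting pivots charged to $P$ at one level. I would cite that bound directly rather than re-derive it. Then I would multiply the per-level growth factor over the $O(\log n)$, or more precisely $O(\log n/\log\log n)$, levels needed before all instances become singletons. That yields a total count of finished pieces of $n^{1/2}$ times a subpolynomial factor, i.e.\ $n^{1/2+o(1)}$. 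A union bound over levels and over the at most $n$ active pieces keeps the whp guarantee.

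One more case must be closed. An active piece that reaches the base case is a single vertex, and it is trivially certified only if it lies in some $S$. I would handle this by letting a singleton instance count as its own pivot, or by noting that single-vertex pieces can be merged into an adjacent piece at a cost of one extra hop. The main obstacle is the counting step: Lemma 4.4 is phrased for a single call with fixed parameters. I need to check that its bound on splitting pivots applies uniformly to every active subpath in every subinstance, with the level-dependent $p$, and that the per-level losses compose to $n^{1/2+o(1)}$ rather than something larger. If direct citation does not fit, my first fallback is to follow their potential argument, which charges splits to pairs of a path vertex and a related pivot whose related set shrinks by a constant factor per level.
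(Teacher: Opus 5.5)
\paragraph{Where the proposal breaks.} The paper gives no proof of its own here. It imports the statement from the hopbound analysis of \cite{DBLP:conf/focs/LiuJS19}, so unrolling their Lemma~4.4 over the recursion tree is the intended route. Your invariant is also the right one for \Cref{thm:ljs}: every active piece lies inside one instance, so a finished piece lies inside its certifying pivot's instance.

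The step that fails is your rule for finishing pieces. You finish only $[\min R^+(s),\max R^-(s)]$, i.e.\ $Q$ intersected with the strongly connected component of $s$. You treat a pivot with $R^-(s)\neq\emptyset\neq R^+(s)$ but $\max R^-(s)<\min R^+(s)$ as a splitter that certifies nothing. But $R^-(s)$ is a prefix and $R^+(s)$ a suffix, so every such $s$ satisfies $x_1\rightsquigarrow s\rightsquigarrow x_t$ and certifies \emph{all} of $Q$.

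In a DAG the strongly connected component of $s$ is $\{s\}$. So under your rules the only finished pieces are pivots lying on $P$, and everything else is re-split down to the base case. Take $G$ to be the directed path $x_1\to\dots\to x_n$ with $P=G$. Any pivot $x_k$ sampled at the root certifies $P$ as a single piece, yet your bookkeeping ends with $n$ singleton pieces. No bound on splitting pivots, from Lemma~4.4 or anywhere else, can rescue the invariant as written.

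The missing idea is the covering/splitting dichotomy behind \cite{fineman_nearly_2017,DBLP:conf/focs/LiuJS19}. Finish $Q$ as soon as its instance samples a pivot related to $Q$ in both directions; pivots on $Q$ are a special case. Then stop tracking $Q$, even though the algorithm scatters its vertices over several sub-instances. If no such pivot is sampled, no vertex of $Q$ is deleted. In that case $Q$ is cut only by pivots that are ancestors-only or descendants-only of $Q$, at most one cut each, and that count is what the related-set analysis controls.

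\paragraph{The counting step.} The composition step is also not yet an argument. \Cref{algo:ljs} has recursion depth about $\log n/\log\log n$, so a per-level growth factor of $(\log n)^{\alpha}$ compounds to $n^{\alpha}$. For instance, a factor of $\log n$ per level already gives $(\log n)^{\log n/\log\log n}=n$, which is vacuous. ``Multiplying the per-level growth factor'' therefore yields $n^{1/2+o(1)}$ only if you have a per-level bound of essentially $(\log n)^{1/2+o(1)}$, and nothing in your sketch provides one. You need to state precisely what Lemma~4.4 bounds for an uncovered piece at level $r$, in terms of the sampling rate $(\log n)^{r+1}/n$ and the sizes of its one-sided related sets. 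The exponent $1/2$ has to be derived from that bound, not from compounding.
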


The following observation follows from the triangle inequality.

\begin{observation}
  \label{lem:pivot-2-approx}
  Let $P=v_1, \dots, v_{\setsize{P}}$ be a shortest path and let $u$ be a vertex such that $v_1$ can reach $u$ and $u$ can reach $v_{\setsize{P}}$. Then $\max(d(v_1, u), d(u, v_{\setsize{P}})) \ge d(v_1, v_{\setsize{P}})/2$.
\end{observation}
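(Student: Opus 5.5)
The claim follows from the triangle inequality together with the fact that subpaths of shortest paths are shortest paths. Since $v_1$ reaches $u$ and $u$ reaches $v_{\setsize{P}}$, both $d(v_1,u)$ and $d(u,v_{\setsize{P}})$ are finite. Concatenating a shortest $v_1$-to-$u$ path with a shortest $u$-to-$v_{\setsize{P}}$ path gives a walk from $v_1$ to $v_{\setsize{P}}$. Its length is therefore an upper bound on the distance:
\[
d(v_1, v_{\setsize{P}}) \le d(v_1,u) + d(u, v_{\setsize{P}}).
\]
Distances are nonnegative, so the sum of two numbers is at most twice their maximum. Hence
\[
d(v_1,v_{\setsize{P}}) \le 2\max\bigl(d(v_1,u), d(u,v_{\setsize{P}})\bigr),
\]
and dividing by two gives the claim. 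Note that we never use that $P$ is a shortest path, only that $u$ lies ``between'' its endpoints in the reachability order. The hypothesis that $P$ is shortest matters only later, when the observation is combined with \Cref{lem:ljs-core-lemma}: a subpath $P_i$ of a shortest path $P$ is again shortest, so $\setsize{P_i}-1 = d((P_i)_1,(P_i)_{\setsize{P_i}})$.

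One subtlety concerns the recursive subinstances. In the application, the distances actually measured are those inside the subgraph $G[V_j]$ in which $u$ was sampled. For this observation, however, distances are taken in $G$ itself, so the inequality holds as stated. Some care is needed later, since a BFS in a subinstance gives distances at least the $G$-distances. Estimates obtained there are thus valid lower bounds on $\td{G}$ only when the BFS paths realize genuine distances, and the subsequent argument will have to handle this. No step of the observation itself is an obstacle.
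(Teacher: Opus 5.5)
Your proof is correct and matches the paper's, which states the observation as an immediate consequence of the triangle inequality: $d(v_1,v_{\setsize{P}}) \le d(v_1,u)+d(u,v_{\setsize{P}}) \le 2\max\bigl(d(v_1,u),d(u,v_{\setsize{P}})\bigr)$. You are also right that shortestness of $P$ is not needed here, and the paper handles your subinstance concern separately through \Cref{cor:ljs-distance-preservation}.
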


To show that the algorithm is useful for diameter estimation, we need to examine how the recursive splitting will affect the distances between any two vertices.
Towards this, the following lemma is key.

\begin{lemma}
  Let $u$ and $v$ such that $u$ can reach $v$.
  If $u$ and $v$ get assigned to the same sub-instance by \Cref{algo:ljs}, then so do all vertices between $u$ and $v$ (i.e., all vertices on any $u$ to $v$ path).
\end{lemma}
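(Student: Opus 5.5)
We argue about a single recursion step of \Cref{algo:ljs} and then extend to all levels by induction. Fix a call on a graph $G'$ with sampled set $S$. Let $u$ reach $v$ in $G'$, let $w$ be any vertex on some $u$-to-$v$ path $Q$ in $G'$, and suppose $u$ and $v$ both survive Step 4 and land in the same part $V_i$. We show that $w$ also survives and lands in $V_i$, and that the whole path $Q$ lies in $G'[V_i]$.

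For every $x \in S$ we need two facts: $w$ is not removed because of $x$, and $w$ has the same reachability pattern to and from $x$ as $u$ and $v$. We split into cases on the relation of $x$ to $u$ and $v$.
\begin{itemize}
\item If $x$ reaches $u$, then $x$ reaches $w$ via $Q$. Since $u$ and $v$ share a pattern, $x$ also reaches $v$.
\item If $v$ reaches $x$, then $w$ reaches $x$, and by the shared pattern so does $u$.
\item If $x$ does not reach $u$, then $x$ reaches $v$ if and only if it reaches $u$ (same pattern), so $x$ reaches neither. Then $x$ cannot reach $w$, because $w$ reaches $v$. Symmetrically, if $v$ does not reach $x$, then $u$ does not reach $x$, and hence $w$ does not reach $x$, since $u$ reaches $w$.
\end{itemize}
So $x \rightsquigarrow w$ holds if and only if $x \rightsquigarrow u$ (equivalently $x \rightsquigarrow v$), and $w \rightsquigarrow x$ holds if and only if $u \rightsquigarrow x$. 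Thus $w$ has exactly the same forward and backward reachability pattern with respect to $S$ as $u$ and $v$.

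It remains to check that $w$ is not removed in Step 4. Removal means that some $x$ both reaches $w$ and is reached by $w$. By the equivalences above, $x$ would then both reach $u$ and be reached by $u$, so $u$ would have been removed as well, a contradiction.

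The main subtlety is which graph the reachability is measured in. Step 4 removes vertices incrementally, and in the recursion reachability is taken in the current induced subgraph $G'$, not in the original $G$. I would therefore phrase everything inside $G'$: the BFS in Step 3 is run in $G'$, and the pattern partition in Step 5 uses reachability in $G'$. If removals happen sequentially inside the BFS, I would treat the removed set as the union over $x$ of the vertices in the forward and backward balls of $x$, which is what the partition definition implicitly assumes.

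Given the single-step claim, every vertex of $Q$ lies in $V_i$, so $Q$ survives intact in $G'[V_i]$. Hence $u$ still reaches $v$ there, and all intermediate vertices are still present. Now induct on the recursion depth. If $u$ and $v$ end up together in a sub-instance at depth $r$, they were together at every earlier depth, since the parts only ever split. At each of those levels the single-step claim moves the whole of every $u$-to-$v$ path along with them. The paths considered at deeper levels are paths in the current subgraph, and by the inductive hypothesis these are exactly the original $u$-to-$v$ paths, which remain entirely present. A useful corollary is that $d(u,v)$ is preserved inside the sub-instance.
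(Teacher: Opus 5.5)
Your proof is correct and follows essentially the same argument as the paper: a case analysis on how a pivot's reachability to and from $w$ compares with that of $u$ and $v$, using $u \rightsquigarrow w \rightsquigarrow v$ along the path. You also make explicit two points the paper leaves implicit, which is a welcome tightening. First, $w$ cannot be deleted in Step 4, since deletion depends only on the reachability pattern. Second, the induction over recursion levels shows that every original $u$-to-$v$ path survives into the sub-instance.
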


\begin{proof}
  Assume that is not the case for some vertex $w$ in between $u$ and $v$.
  Then there is a pivot vertex $s \in S$ such that the reachability between $s$ and $w$ differs from the reachability between $s$ and $u$ and $v$ (which is the same).

  Assume that $w$ can be reached from $s$ but $u$ and $v$ cannot.
  Since $w$ is in between $u$ and $v$, $w$ can reach $v$ and thus $v$ can be reached from $s$, which contradicts our assumption.
  The other three cases are (1) $w$ can reach $s$, but $u$ and $v$ cannot, (2) $u$ and $v$ can be reached from $s$ but $w$ cannot, and (3) $u$ and $v$ can reach $s$, but $w$ cannot.
  These can be argued analogously.
\end{proof}

This allows us to argue how the distances between vertices change when we create the sub-instances in the algorithm.
In particular, we have the following.

\begin{corollary}
  Let $G'$ be an arbitrary sub-instance created in \Cref{algo:ljs} and $G$ be the original graph.
  Then, for all $u,v \in V(G')$, we have $d_{G'}(u,v) = d_G(u,v)$.
  \label{cor:ljs-distance-preservation}
\end{corollary}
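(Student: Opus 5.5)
The plan is to prove the two inequalities $d_{G'}(u,v) \ge d_G(u,v)$ and $d_{G'}(u,v) \le d_G(u,v)$ separately. Fix an arbitrary sub-instance $G'$. By construction, $G'$ is obtained from $G$ by a finite sequence of steps. Each step first deletes the vertices found in both the forward and backward BFS of a pivot. It then takes an induced subgraph $G[V_i]$ on one class of the partition. So $G'$ is an induced subgraph of $G$ on some vertex set $V' \subseteq V(G)$. The first inequality is then immediate: every path in $G'$ is a path in $G$ with the same length, so distances can only grow when passing to $G'$. This includes the case where they become infinite.

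For the reverse inequality, I will argue by induction over the recursion depth. It suffices to show one step. Let $H$ be the current instance and $H[V_i]$ a child instance, and let $u,v \in V_i$ with $u$ reaching $v$ in $H$. By the preceding lemma, every vertex on any $u$-to-$v$ path in $H$ is assigned to the same sub-instance as $u$ and $v$, namely $V_i$. Hence any shortest $u$-$v$ path of $H$ lies entirely inside $V_i$ and therefore inside $H[V_i]$. This gives $d_{H[V_i]}(u,v) \le d_H(u,v)$.

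If $u$ cannot reach $v$ in $H$, both distances are infinite by the first inequality, so the claim holds trivially. Chaining the equalities $d_{H[V_i]} = d_H$ on $V_i$ along the path of the recursion tree from the root to $G'$ yields $d_{G'}(u,v) = d_G(u,v)$ for all $u,v \in V(G')$.

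The main obstacle is checking that the preceding lemma really covers every vertex that might leave the instance, including the vertices removed in step~4 and not only those placed in a different class $V_j$. Suppose an intermediate vertex $w$ on a $u$-$v$ path were removed because it is both reachable from and reaching the same pivot $s$. Then $u$ reaches $w$ and hence reaches $s$, and $s$ reaches $w$ and hence reaches $v$. So $u$ and $v$ have different reachability patterns to $s$ than a surviving pair in a common class would need. In particular, $u$ reaches $s$ and $s$ reaches $u$ cannot both fail, so $u$ and $v$ would not share a class unless they too were removed. I would make this case explicit, noting that it falls under the lemma's statement that all vertices between $u$ and $v$ are assigned to the same sub-instance, where removal counts as being assigned differently. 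With that, the induction closes.
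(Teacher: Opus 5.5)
Your proof is correct and follows the paper's route, which presents the corollary as an immediate consequence of the preceding lemma applied down the recursion: sub-instances are induced subgraphs, so distances cannot shrink, and every vertex of a shortest $u$--$v$ path stays in the same sub-instance, so they cannot grow. Your explicit treatment of vertices deleted in step~4 covers a case the paper's proof of the lemma leaves implicit; it reads more cleanly as follows: if $u,v$ share a class, then $s$ reaching $v$ forces $s$ to reach $u$, and together with $u$ reaching $s$ this means $u$ itself would have been removed.
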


That is, if two vertices get assigned to the same sub-instance, their distance is preserved.
Thus, if we think about two different levels of recursion, some distances in the deeper level are $\infty$ because the vertices were disconnected, but all remaining non-infinite distances are the same as in the original graph.
With this, we have all pieces in hand to prove the theorem of this section.

\begin{proof}[Proof of \Cref{thm:ljs}]
  Applying \Cref{lem:ljs-core-lemma} a diameter path $P$, we learn that whp $P$ can be split into at most $n^{1/2+o(1)}$ subpaths such that each subpath in fully contained in sub-instances of $G'$ such that one of its pivots can reach the start and the end of the respective subpath.
  By the pigeon hole principle, one of these subpaths must have at least a $1/n^{1/2+o(1)}$ fraction of the length of the path.
  By \Cref{cor:ljs-distance-preservation}, the length of this subpath is preserved in this sub-instance.
  Applying \Cref{lem:pivot-2-approx} to the pivot in the respective sub-instance that can be reached by the start and reach the end of this subpath, yields the desired approximation guarantee as in the algorithm we will run BFS from it.
\end{proof}

\fi

%%
%% Bibliography
%%

%% Please use bibtex, 

\bibliography{papers.bib}

% \appendix

\end{document}